\newtheorem{theorem}{Theorem}
\newtheorem{example}[theorem]{Example}
\newtheorem{lemma}[theorem]{Lemma}
\newtheorem{definition}[theorem]{Definition}
\newtheorem{prop}[theorem]{Proposition}
\def\paragraph#1{\medskip\noindent\textbf{#1}\ }
\title{Models for CSP with availability information}
\author{Gavin Lowe} 
\begin{document}
\maketitle

\begin{abstract}
We consider models of CSP based on recording what events are available as
possible alternatives to the events that are actually performed.  We present
many different varieties of such models.  For each, we give a compositional
semantics, congruent to the operational semantics, and prove full abstraction
and no-junk results.  We compare the expressiveness of the different models.
\end{abstract}

\section{Introduction}

In this paper we consider a family of semantic models of CSP~\cite{awr:csp}
that record what events a process makes available as possible alternatives to
the events that are actually performed.  For example, the models will
distinguish $a \then STOP \extchoice b \then STOP$ and $a \then STOP
\intchoice b \then STOP$: the former offers its environment the choice
between~$a$ and~$b$, so can make~$a$ available before performing~$b$; however
the latter decides internally whether to offer~$a$ or~$b$, so cannot make~$a$
available before performing~$b$.

A common way of motivating process algebras (dating back to~\cite{milner80})
is to view a process as a black box with which the observer interacts.  The
models in this paper correspond to that black box having a light for each
event that turns on when the event is available (as in~\cite{vG01,vG93}); the
observer can record which lights turn on in addition to which events are
performed.

I initially became interested in such models by considering message-passing
concurrent programming languages that allow code to test whether a channel is
ready for communication without actually performing the communication.
In~\cite{gavin:ready}, I considered the effect of extending CSP with a
construct 
``$\If \Ready a \Then P \Else Q$''
that tests whether the event $a$ is ready for communication (i.e., whether
this process's environment is ready to perform~$a$), acting like~$P$ or~$Q$
appropriately.
The model in~\cite{gavin:ready} recorded what events were made available by a
process, in addition to the events actually performed.  We investigate such
models more fully in this paper.  We show that ---even without the
above construct--- there are many different variations, with different
expressive power.  

By convention, a denotational semantic model of CSP is always
\emph{compositional}, i.e., the semantics of a composite process is given in
terms of the semantics of its components.  Further, there are several other
desirable properties of semantic models:
\begin{description}
\item[Congruence to the operational semantics] The denotational semantics can
  either be extracted from the operational semantics, or calculated
  compositionally, both approaches giving the same result;

\item[Full abstraction]
The notion of semantic equivalence  corresponds to some natural
equivalence, typically defined in terms of testing;

\item[No-junk] The denotational semantic domain corresponds precisely to the
  semantics of processes: for each element of the semantic domain, we can
  construct a corresponding process.
\end{description}
Each of the semantic models in this paper satisfies these properties.

In Section~\ref{sec:offer} we describe our basic model.  We formalise the
notion of availability of events in terms of the standard operational
semantics.  We then formalise the denotational semantic domain, and explain
how to extract denotational information from the semantics.  We then give a
congruent compositional denotational semantics, and prove full abstraction and
no-junk results.

In Section~\ref{sec:variations} we describe variations on the basic model, in
two dimensions: one dimension restricts the number of observations of
availability between successive standard events; the other dimension allows
the simultaneous availability of multiple events to be recorded.  For each
resulting model, we describe compositional semantics, and full abstraction and
no-junk results (we omit some of the details because of lack of space, and to
avoid repetition).  We then study the relative expressive power of the models.

Finally, in Section~\ref{sec:conc}, we discuss various aspects of our models,
some additional potential models, and some related work. 

\paragraph{Overview of CSP}
We give here a brief overview of the syntax and semantics of CSP; for simplicity and
brevity, we consider a fragment of the language in this paper.  We also give a
brief overview of the Traces and Stable Failures Models of CSP\@.  For more
details, see~\cite{H85,awr:csp}.

CSP is a process algebra for describing programs or {\em processes}\/ that
interact with their environment by communication.  Processes communicate via
atomic events, from some set~$\Sigma$.  Events often involve passing values
over channels; for example, the event $c.3$ represents the value~$3$ being
passed on channel~$c$.  

The simplest process is $STOP$, which represents a deadlocked process that
cannot communicate with its environment.  The process $\Div$ represents a
divergent process that can only perform internal events.

The process $a \then P$ offers its environment the event~$a$; if the event is
performed, it then acts like~$P$.  The process $c?x \then P$ is initially
willing to input a value $x$ on channel~$c$, i.e.~it is willing to perform any
event of the form~$c.x$; it then acts like~$P$ (which may use~$x$).
Similarly, the process $?a:A \then P$ is initially willing to perform any
event~$a$ from~$A$; it then acts like~$P$ (which may use~$a$).

The process $P \extchoice Q$ can act like either $P$ or~$Q$, the choice being
made by the environment: the environment is offered the choice between the
initial events of~$P$ and~$Q$.  By contrast, $P \intchoice Q$ may act like
either~$P$ or~$Q$, with the choice being made internally, not under the
control of the environment;  $\Intchoice_{x:X} P_x$
nondeterministically acts like any~$P_x$ for $x$ in~$X$.
The process $P \timeout Q$ represents a sliding choice or timeout: it
initially acts like~$P$, but if no event is performed then it can internally
change state to act like~$Q$.

The process $P \parallel[A][B] Q$ runs $P$ and~$Q$ in parallel; $P$ is
restricted to performing events from~$A$; $Q$ is restricted to performing
events from~$B$; the two processes synchronise on events from~$A \inter B$.
The process $P \interleave Q$ interleaves $P$ and $Q$,
i.e.\ runs them in parallel with no synchronisation.

The process $P \hide A$ acts like~$P$, except the events from~$A$ are hidden,
i.e.~turned into internal, invisible events, denoted~$\tau$, which do not need
to synchronise with the environment.  
The process $P \rrn{R}$ represents $P$ where events are renamed according to
the relation~$R$, i.e., $P \rrn{R}$ can perform an event~$b$ whenever $P$ can
perform an event~$a$ such that $a~R~b$. 

Recursive processes may be defined equationally, or using the notation $\mu X
\spot P$, which represents a process that acts like $P$, where each occurrence
of $X$ represents a recursive instantiation of $\mu X \spot P$.

Prefixing ($\then$)  binds tighter than each of the binary
choice operators, which in turn bind tighter than the parallel operators.

CSP can be given both an operational and denotational semantics.  The
denotational semantics can either be extracted from the operational semantics,
or defined directly over the syntax of the language; see~\cite{awr:csp}.  It
is more common to use the denotational semantics when specifying or describing
the behaviours of processes, although most tools act on the operational
semantics.
A \emph{trace} of a process is a sequence of (visible) events that a process
can perform.  
If $tr$ is a
trace, then $tr \project A$ represents the restriction of~$tr$ to the events
in~$A$, whereas $tr \hide A$ represents $tr$ with the events from~$A$ removed;
concatenation is written ``$\,\cat\,$''; $A^*$ represents the set of traces
with events from~$A$.
A \emph{stable failure} of a process~$P$ is a pair $(tr,X)$, which represents
that $P$ can perform the trace~$tr$ to reach a stable state (i.e.~where no
internal events are possible) where $X$ can be refused, i.e., where none of
the events of~$X$ is available.  

\section{Availability information}
\label{sec:offer}

In this section we consider a model that record that particular events are
available during an execution.  We begin by extending the operational
semantics so as to formally define this notion of availability.  We then
define our semantic domain ---traces containing both standard events and
availability information--- with suitable healthiness conditions.  We then
present compositional trace semantics, and show that it is congruent to the
operational semantics.  Finally, we prove full abstraction and no-junk
results.

We write $\offer a$ to record that the event~$a$ is offered by a process,
i.e.~$a$ is available.  We augment the operational semantics with actions to
record such offers (we term these \emph{actions}, to distinguish them from
standard events).  Formally, we define a new transition relation $\transd{}$
from the standard transition relation~$\trans{}$
(see~\cite[Chapter~7]{awr:csp}) by:
%
\twoinpage{
\begin{eqnarray*}
P \transd{\alpha} Q & \;\iff\; & P \trans{\alpha} Q, 
  \gap \mbox{for $\alpha \in \Sigma \union \set{\tau}$},
\end{eqnarray*}}{
\begin{eqnarray*}
P \transd{\offer a} P & \;\iff\; & P \trans{a} \,.
\end{eqnarray*}}
For  example:
\(
a \then STOP \extchoice b \then STOP \;\transd{\offer a} \;
  a \then STOP \extchoice b \then STOP \;\transd{b}\; STOP.
\)
Note that the transitions corresponding to $\offer$ actions do not change the
state of the process.  

We now consider an appropriate form for the denotational semantics.  One might
wonder whether it is enough to record availability information only at the
\emph{end} of a trace (by analogy to the stable failures model).  However, a
bit of thought shows that such a model would be equivalent to the standard
Traces Model: a process can perform the trace $tr \cat \trace{\offer a}$
precisely if it can perform the standard trace~$tr \cat \trace{a}$.

We therefore record availability information \emph{throughout} the trace.
For convenience, for $A \subseteq \Sigma$, we define
\[
\offer A  =  \set{\offer a | a \in A}, \gap
A^\dagger  =  A \union \offer A, \gap
A^{\dagger\tau}  =  A^\dagger \union \set{\tau}.
\]
We define an \emph{availability trace} to be a sequence $tr$ in
$(\Sigma^\dagger)^*$.  We can extract the traces (of $\Sigma^\dagger$ actions)
from the operational semantics (following the approach in \cite[Chapter
  7]{awr:csp}):
\begin{definition}
\label{def:traces}
We write $P \TransTau{s} Q$, for $s = \trace{\alpha_1, \ldots, \alpha_n} \in
(\Sigma^{\dagger\tau})^*$, if there exist $P_0 = P,\linebreak[1] P_1,
\ldots,\linebreak[1] P_n = Q$ such that $P_i \transd{\alpha_{i+1}} P_{i+1}$
for $i = 0, \ldots, n-1$.
We write $P \Trans{tr} Q$, for $tr \in (\Sigma^\dagger)^*$, if there is some
$s$ such that $P \TransTau{s} Q$ and $tr = s \hide \tau$.
\end{definition}
\begin{example}
The process $a \then STOP \extchoice b \then STOP$ has the availability trace
$\trace{\offer a, b}$.  However, the process $a \then STOP \intchoice b \then
STOP$ does not have this trace.  This model therefore distinguishes these two
processes, unlike the standard Traces Model.
\end{example}

Note in particular that we may record the availability of events in unstable
states (where $\tau$ events are available), by contrast with models like the
Stable Failures Model that record (un)availability information only in stable
states.  The following example contrasts the two models. 
\begin{example}
\label{example:vs_failures}
The processes $a \then STOP$ and $a \then STOP \intchoice STOP$ are
distinguished in the Stable Failures Model, since the latter has stable
failure $(\trace{}, \set{a})$; however they have the same availability traces.

The processes $(a \then STOP \timeout b \then STOP) \intchoice (b \then STOP
\timeout a \then STOP)$ and $a \then STOP \intchoice b \then STOP$ are
distinguished in the Availability Traces Model, since only the former has the
availability trace $\trace{\offer a, b}$; however, they have the same stable
failures.
\end{example}

The availability-traces of process~$P$ are then $\set{tr | P \Trans{tr}}$.
%
The following definition captures the properties of this model.
\begin{definition}
\label{def:availability}
The \emph{Availability Traces Model}~$\A$ contains those sets~$T
\subseteq (\Sigma^\dagger)^*$ that satisfy the following conditions:
\begin{enumerate}
\item
$T$ is non-empty and prefix-closed.

\item
\label{healthy:offer_duplicate}
$\offer$  actions can always be remove from or duplicated
within a trace:
\begin{eqnarray*}
tr \cat \trace{\offer a} \cat tr' \in T & \implies &
  tr \cat \trace{\offer a, \offer a} \cat tr' \in T \land 
  tr \cat tr' \in T.
\end{eqnarray*}

\item
\label{healthy:offer_implies_event}
If a process can offer an event it can perform it:
\(
tr \cat \trace{\offer a} \in T \; \implies \; tr \cat \trace{a} \in T.
\)

\item
\label{healthy:event_implies_offer}
If a process can perform an event it can first offer it:
\(
tr \cat \trace{a} \cat tr' \in T \; \implies \;
  tr \cat \trace{\offer a, a} \cat tr' \in T.
\)
\end{enumerate}
\end{definition}
\begin{lemma}
\label{lem:opsem-healthy}
For all processes~$P$,\, $\set{tr | P \Trans{tr}}$ is an element of the
Availability Traces Model, i.e., satisfies the four healthiness conditions.
\end{lemma}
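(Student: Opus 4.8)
The plan is to verify each of the four healthiness conditions directly from the definition of $\Trans{}$ in Definition~\ref{def:traces}, exploiting the two defining features of the augmented relation~$\transd{}$: an offer action is a self-loop that leaves the state unchanged ($P \transd{\offer a} P$), and such a self-loop is enabled \emph{exactly} when the corresponding event is, since $P \transd{\offer a} P$ holds iff $P \trans{a}$, equivalently iff $P \transd{a} R$ for some state~$R$. Throughout I would work with a witnessing computation: whenever $tr \in \set{tr' | P \Trans{tr'}}$, there is some $s \in (\Sigma^{\dagger\tau})^*$ with $P \TransTau{s} Q$ and $s \hide \tau = tr$. The key bookkeeping observation is that $\hide \tau$ merely deletes the $\tau$s, so the non-$\tau$ actions of~$s$ are precisely the actions of~$tr$, in order; hence any decomposition $tr = tr_1 \cat \trace{u} \cat tr_2$ with $u$ a single $\Sigma^\dagger$ action lifts to a decomposition $s = s_1 \cat \trace{u} \cat s_2$ with $s_i \hide \tau = tr_i$, and the intermediate state reached after~$s_1$ is the one at which~$u$ is performed.

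For the first condition, non-emptiness follows from the empty computation $P \TransTau{\trace{}} P$, giving $\trace{} \in T$. For prefix-closure, given $P \TransTau{s} Q$ with $s \hide \tau = tr$ and a prefix $tr' \leq tr$, I would take the shortest prefix $s' \leq s$ with $s' \hide \tau = tr'$ (which exists because deleting $\tau$s is monotone); truncating the computation at~$s'$ yields $P \TransTau{s'} Q'$ for the intermediate state~$Q'$, hence $P \Trans{tr'} Q'$. Condition~\ref{healthy:event_implies_offer} is equally immediate: decomposing the witnessing computation for $tr \cat \trace{a} \cat tr'$ as above, the state~$R$ at which~$a$ is performed satisfies $R \transd{a} R'$, so $R \trans{a}$ and therefore $R \transd{\offer a} R$; inserting this self-loop immediately before the $a$-transition produces a computation for $tr \cat \trace{\offer a, a} \cat tr'$.

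Conditions~\ref{healthy:offer_duplicate} and~\ref{healthy:offer_implies_event} are handled by the same localisation. For Condition~\ref{healthy:offer_duplicate}, decompose the witnessing computation for $tr \cat \trace{\offer a} \cat tr'$ so that the offer is performed by a self-loop $R \transd{\offer a} R$; since this transition returns to the same state~$R$, I may either insert a second copy of it (yielding $tr \cat \trace{\offer a, \offer a} \cat tr'$) or delete it outright (yielding $tr \cat tr'$), in both cases leaving the remainder of the computation valid because~$R$ is unchanged. For Condition~\ref{healthy:offer_implies_event}, a witnessing computation for $tr \cat \trace{\offer a}$ ends with $R \transd{\offer a} R$, whence $R \trans{a}$ gives some $R'$ with $R \transd{a} R'$; replacing the final offer by this event transition yields a computation for $tr \cat \trace{a}$.

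I do not expect any genuine obstacle here: the result is a routine verification, and the only point requiring care is the correspondence between positions in~$tr$ and positions in the $\tau$-laden computation~$s$. Once that correspondence is fixed, every case amounts to a single local edit of the computation ---inserting, deleting, or relabelling one transition--- justified directly by the two defining properties of~$\transd{}$.
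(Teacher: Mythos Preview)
Your argument is correct; each of the four conditions is verified by exactly the local edit you describe, and the only subtlety---aligning positions in $tr$ with positions in the underlying $\tau$-laden run~$s$---is handled by your decomposition observation. The paper itself does not give a proof of this lemma, leaving it as a routine consequence of the definition of~$\transd{}$, so your write-up simply supplies the details the paper chose to omit.
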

%


\paragraph{Compositional traces semantics}
We now give compositional rules for the traces of a process.  
We write $\Tracesa[P]$ for the traces of~$P$\footnote{We include the
subscript~``$A$'' in $\Tracesa[P]$ to distinguish this semantics from the
standard traces semantics, $\Traces[P]$.}.
Below we will show that these are congruent to
the operational definition above.

$STOP$ and $\Div$ are equivalent in this model: they can neither perform nor
offer standard events.  The process $a \then P$ can initially signal that it
is offering~$a$; it can 
then perform~$a$, and continue like~$P$.
\begin{eqnarray*}
\Tracesa[STOP] & = & \Tracesa[\Div] \;=\; \set{\trace{}}
\\
\Tracesa[a \then P] & = &
  \begin{align}
  \set{\offer a}^* \union 
   \set{ tr \cat \trace{a} \cat tr' | 
       tr \in \set{\offer a}^* \land tr' \in \Tracesa[P]}.
  \end{align}
\end{eqnarray*}

The process $P \timeout Q$ can either perform a trace of~$P$, or can perform a
trace of~$P$ with no standard events, and then (after the timeout) perform a
trace of~$Q$.  The process $P \intchoice Q$ can perform traces of either of
its components; the semantics of replicated nondeterministic choice is the
obvious generalisation.  
\begin{eqnarray*}
\Tracesa[P \timeout Q] & = & 
  \begin{align}
  \Tracesa[P] \union\null
  \set{ tr_P \cat tr_Q | 
     tr_P \in \Tracesa[P] \land tr_P \project \Sigma = \trace{}
     \land tr_Q \in \Tracesa[Q]},
  \end{align}
\\
\Tracesa[P \intchoice Q] & = & \Tracesa[P] \union \Tracesa[Q],
\\
\Tracesa[\Intchoice_{i \in I} P_i] & = & \Union\nolimits_{i \in I} \Tracesa[P_i].
\end{eqnarray*}

Before the first visible event, the process $P \extchoice Q$ can perform an
$\offer a$ action if \emph{either}~$P$ or~$Q$ can do so.  Let $tr \interleave
tr'$ be the set of ways of interleaving $tr$ and $tr'$ (this operator is
defined in~\cite[page~67]{awr:csp}).  The three sets in the definition below
correspond to the cases where (a)~neither process performs any visible events,
(b)~$P$~performs at least one visible event (after which, $Q$ is turned off),
and (c)~the symmetric case where $Q$~performs at least one visible event.
\begin{eqns}
\Tracesa[P \extchoice Q] = \\*
\gap
  \begin{align}
  \set{ tr | 
    \exists tr_P \in \Tracesa[P], tr_Q \in \Tracesa[Q] \spot 
       tr_P \project \Sigma = tr_Q \project \Sigma = \trace{} \land
    tr \in tr_P \interleave tr_Q} 
  \union\null \\
  \set{ tr \cat \trace{a} \cat tr_P' | 
    \exists tr_P \cat \trace{a} \cat tr_P' \in \Tracesa[P], 
        tr_Q \in \Tracesa[Q] \spot \\
    \gap tr_P \project \Sigma = tr_Q \project \Sigma = \trace{} \land
    a \in \Sigma \land tr \in tr_P \interleave tr_Q} 
  \union\null \\
  \set{ tr \cat \trace{a} \cat tr_Q' | 
    \exists tr_P \in \Tracesa[P], 
        tr_Q \cat \trace{a} \cat tr_Q' \in \Tracesa[Q] \spot \\
    \gap tr_P \project \Sigma = tr_Q \project \Sigma = \trace{} \land
    a \in \Sigma \land tr \in tr_P \interleave tr_Q} .
  \end{align}
\end{eqns}

In a parallel composition of the form $P \parallel[A][B] Q$,\, $P$ is
restricted to actions from $A^\dagger$, and $Q$ is restricted to actions from
$B^\dagger$.  Further, $P$ and $Q$ must synchronise upon both standard events
from $A \inter B$ and offers of events from $A \inter B$.  We write $tr_P
\parallel[(A \inter B)^\dagger] tr_Q$ for the set of ways of synchronising
$tr_P$ and $tr_Q$ on actions from $(A \inter B)^\dagger$ (this operator is
defined analogously to in~\cite[page~70]{awr:csp}). The semantics of
interleaving is similar. 
%
\begin{eqnarray*}
\Tracesa[{P \parallel[A][B] Q}] & = & 
  \begin{align}
  \set{tr | \exists tr_P \in \Tracesa[P] \inter (A^\dagger)^*, 
      tr_Q \in \Tracesa[Q] \inter (B^\dagger)^* \spot 
        tr \in tr_P \parallel[(A \inter B)^\dagger] tr_Q }.
  \end{align}
\\
\Tracesa[P \interleave Q] & = & 
   \set{tr | \exists tr_P \in \Tracesa[P],  tr_Q \in \Tracesa[Q] \spot 
     tr \in tr_P \interleave tr_Q }.
\end{eqnarray*}

The semantic equation for hiding of~$A$ captures that $\offer A$ actions are
blocked, and $A$ events are internalised.  For relational renaming, we lift the renaming to apply to $\offer$ actions,
i.e.~$(\offer a)~R~(\offer b)$ if and only if $a~R~b$; we then lift the
relation to traces by pointwise application.  The semantic equation is then a
further lift of~$R$. 
\begin{eqnarray*}
\Tracesa[P \hide A] & = &
  \set{ tr_P \hide A | 
    tr_P \in \Tracesa[P] \land tr_P \project \offer A = \trace{}}.
\\
\Tracesa[P\rrn{R}] & = &
  \set{tr | \exists tr_P \in \Tracesa[P] \spot tr_P~R~tr }.
\end{eqnarray*}

We now consider the semantics of recursion.  Our approach follows the standard
method using complete partial orders; see, for example,
\cite[Appendix~A.1]{awr:csp}.
\begin{lemma}
\label{lem:cpo}
The Availability Traces Model forms a complete partial order under the
subset ordering~$\subseteq$, with $\Tracesa[\Div]$ as the bottom element.
\end{lemma}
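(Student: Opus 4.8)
The plan is to establish the three ingredients of a complete partial order directly from Definition~\ref{def:availability}: that $\subseteq$ is a partial order on~$\A$, that $\A$ has a least element, and that every directed subset of~$\A$ has a least upper bound lying in~$\A$. The first is immediate, since $\subseteq$ is a partial order on any family of sets, so no work is required there.

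First I would identify the bottom element. Since $STOP$ and $\Div$ are equated in this model, $\Tracesa[\Div] = \set{\trace{}}$. This set lies in~$\A$: it is non-empty and prefix-closed, and healthiness conditions~2--4 of Definition~\ref{def:availability} hold vacuously, because no trace of the form demanded by their premises (each of which contains an $\offer a$ or an $a$ action) can equal~$\trace{}$. Moreover, every $T \in \A$ is non-empty and prefix-closed, hence contains~$\trace{}$, so $\set{\trace{}} \subseteq T$. Thus $\Tracesa[\Div]$ is genuinely the least element under~$\subseteq$.

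For directed completeness, given a directed set $D \subseteq \A$, I claim its least upper bound is simply $\bigcup D$. Any upper bound of~$D$ (with respect to~$\subseteq$, in any family of sets) must contain $\bigcup D$, so it suffices to show $\bigcup D \in \A$; it is then automatically the least upper bound within~$\A$. The heart of the argument is that each healthiness condition is preserved under union. Each condition has the shape ``if a trace of a certain form lies in the set, then certain related traces also lie in the set'', and crucially the premise asserts membership of a \emph{single} trace. Hence if the premise holds in $\bigcup D$, it already holds in some single $T \in D$; since $T$ satisfies the condition, the conclusion holds in~$T$, and therefore in $\bigcup D$. Non-emptiness of $\bigcup D$ follows because $D$ is non-empty (by directedness) and each of its members contains~$\trace{}$; and prefix-closure follows likewise, since a prefix of any $s \in \bigcup D$ lies in the same $T$ that contains~$s$, which is prefix-closed.

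I do not expect a genuine obstacle. The single subtle point is that directedness (equivalently, non-emptiness of the index set) is used \emph{only} to guarantee that the union is non-empty; conditions~2--4 and prefix-closure are in fact closed under arbitrary unions. Indeed the same reasoning shows $\A$ is closed under arbitrary unions and intersections and is therefore a complete lattice, of which the required cpo structure is a special case. For the semantics of recursion, however, only directed ($\omega$-chain) completeness together with the bottom element $\Tracesa[\Div]$ is needed, which is exactly what the lemma asserts.
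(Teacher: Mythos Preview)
Your argument is correct. The paper itself does not supply a proof of this lemma: it is stated and then immediately used, the verification being treated as routine. What you have written is precisely the standard justification one would give --- checking that $\set{\trace{}}$ satisfies the healthiness conditions and is contained in every member of~$\A$, and observing that each healthiness condition has a single-trace premise so is preserved under union --- so there is nothing to contrast. Your side remark that~$\A$ is in fact a complete lattice is accurate but goes beyond what the paper needs or claims.
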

%
%
%
\begin{lemma}
Each of the operators is continuous with respect to the $\subseteq$ ordering.
\end{lemma}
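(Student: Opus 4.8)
The plan is to follow the standard route for CSP semantics over a subset CPO: first verify that each operator is monotone, then upgrade this to continuity using the fact that, by Lemma~\ref{lem:cpo}, the least upper bound of a directed set $\Delta \subseteq \A$ is simply its union $\bigcup \Delta$. Continuity of an operator~$F$ then amounts to the equation $F(\bigcup \Delta) = \bigcup \set{F(T) | T \in \Delta}$ for every directed~$\Delta$; for the binary and replicated operators this is read with respect to the product order, so establishing it subsumes continuity in each argument separately, which is what the least-fixed-point construction for recursion requires.

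Monotonicity is routine and I would dispatch it first. Each semantic equation expresses the trace-set of a composite as a union of sets, every one of which has the shape ``there exist witness traces in the component semantics, subject to a side condition, from which $tr$ is assembled''. Enlarging a component set can only add witnesses, never remove them, so each operator is monotone; in particular the inclusion $\supseteq$ of the continuity equation is immediate, since $T \subseteq \bigcup \Delta$ for each $T \in \Delta$.

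The substance of the proof is the reverse inclusion, and here I would invoke the \emph{finite-witness} property shared by all the equations: each trace~$tr$ in the result of an operator is justified by only finitely many traces of the operands --- one witness for the unary operators $a \then P$, $P \hide A$ and $P\rrn{R}$, one from each operand for the binary operators $P \timeout Q$, $P \intchoice Q$, $P \extchoice Q$, $P \parallel[A][B] Q$ and $P \interleave Q$, and a single witness (from one fixed index) for $\Intchoice_{i \in I} P_i$. Given $tr \in F(\bigcup \Delta)$, I would fix such witnesses; each lies in some member of~$\Delta$, and directedness supplies a single $T \in \Delta$ lying above all of them (for two witnesses $tr_P \in T_1$, $tr_Q \in T_2$, take $T \in \Delta$ with $T_1 \union T_2 \subseteq T$). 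The very same witnesses then justify $tr \in F(T)$, whence $tr \in \bigcup \set{F(T') | T' \in \Delta}$, closing the equation.

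The only points I would treat with any care are the operators carrying side conditions --- the blocking condition $tr_P \project \offer A = \trace{}$ in hiding, the ``no visible event yet'' conditions $tr_P \project \Sigma = \trace{}$ in $\timeout$ and $\extchoice$, and the alphabet restrictions $tr_P \in (A^\dagger)^*$ in the parallel rule. These are conditions on individual witness traces, so they survive unchanged once a common $T$ is chosen and present no genuine obstacle; the finite-witness argument is uniform across them. The constants $STOP$ and $\Div$ are continuous trivially, and Lemma~\ref{lem:cpo} guarantees that the unions produced lie in~$\A$, so the construction of least fixed points for recursion goes through.
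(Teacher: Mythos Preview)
Your argument is correct and is precisely the standard route one takes here: monotonicity is immediate from the existential form of every clause, and the reverse inclusion for continuity follows because each trace of a composite is determined by at most finitely many witness traces in the components, so directedness of~$\Delta$ lets you collect them into a single member. The side conditions you flag (in hiding, sliding choice, external choice and alphabetised parallel) are indeed properties of individual witnesses and transfer unchanged.

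There is nothing to compare against: the paper states this lemma without proof and proceeds directly to invoke Tarski's Theorem. Your write-up supplies exactly the argument one would expect to see if the paper had spelled it out.
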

Hence from Tarski's Theorem, each mapping~$F$ definable using the operators of
the language has a least fixed point given by $\Union_{n \ge 0}
F^n(\Div)$.  This justifies the following definition.
\begin{eqns}
\Tracesa[\mu X \spot F(X)] = 
  \mbox{the $\subseteq$-least fixed point of the
  semantic mapping corresponding to~$F$.}
\end{eqns}

\smallskip

The following theorem shows that the two ways of capturing the traces are
congruent; it can be proved by a straightforward structural induction.
\begin{theorem}
\label{thm:congruent}
For all traces~$tr \in (\Sigma^\dagger)^*$:\,
\(
tr \in \Tracesa[P]  \;\mbox{iff}\;  P \Trans{tr}.
\)
\end{theorem}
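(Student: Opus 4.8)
The plan is to prove the biconditional by structural induction on the syntax of~$P$, showing for each operator that the compositional definition of $\Tracesa$ agrees with the operationally-extracted trace set $\set{tr \mid P \Trans{tr}}$. The statement to establish at each node is exactly the claimed equivalence $tr \in \Tracesa[P] \iff P \Trans{tr}$, and the induction hypothesis gives this equivalence for each immediate subprocess. For the base cases $STOP$ and $\Div$, I would observe directly from the operational semantics that neither process has any $\transd{}$-transition carrying a $\Sigma^\dagger$ action, so the only extracted trace is $\trace{}$, matching the defining equation $\Tracesa[STOP] = \Tracesa[\Div] = \set{\trace{}}$.

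\medskip\noindent For each inductive case the argument splits into the two directions. For the ``$\Leftarrow$'' direction I would take a derivation $P \Trans{tr} Q$, decompose it according to the operational rule that governs the outermost operator, extract the corresponding behaviours of the subprocesses, apply the induction hypothesis to place those sub-behaviours in the appropriate $\Tracesa$ sets, and then check that $tr$ lies in the set prescribed by the semantic equation. The ``$\Rightarrow$'' direction runs the same correspondence backwards: given membership of $tr$ in the right-hand side of the semantic equation, I would recover witnessing sub-traces, use the induction hypothesis to obtain operational derivations for them, and reassemble these into a single derivation $P \Trans{tr}$ using the operational rules (together with the augmented $\offer$-rule $P \transd{\offer a} P \iff P \trans{a}$). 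The interesting operators are the choices, the parallel combinators, hiding, and renaming; for prefixing one notes that $a \then P$ can emit $\offer a$ actions (via the augmented rule) any number of times before performing~$a$ and handing control to~$P$, exactly matching the $\set{\offer a}^*$ prefix in the equation.

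\medskip\noindent The step I expect to be the main obstacle is the external choice $P \extchoice Q$, because its operational rules treat $\tau$-actions, $\offer$-actions, and standard events asymmetrically: before the first visible event both branches remain live and either may contribute an $\offer a$ (so the behaviours interleave), whereas the first genuine $\Sigma$-event resolves the choice and discards the other branch. Matching this against the three-way case split in the semantic equation requires care to see that the pre-visible portions of the two sub-traces, each satisfying $tr_P \project \Sigma = tr_Q \project \Sigma = \trace{}$, interleave correctly, and that the resolving event together with its continuation comes wholly from one side. The parallel and hiding cases carry a similar but more routine flavour: for $P \parallel[A][B] Q$ I must track that synchronisation is imposed on both standard events and offers from $(A \inter B)^\dagger$, and that each side is confined to its alphabet decorated with offers; for hiding I must check that the side condition $tr_P \project \offer A = \trace{}$ reflects the fact that $\offer a$ actions for hidden~$a$ are blocked rather than turned into~$\tau$. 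Throughout, the healthiness conditions of Definition~\ref{def:availability}, already guaranteed for each subprocess's trace set by Lemma~\ref{lem:opsem-healthy}, can be invoked to insert or delete $\offer$ actions where the two descriptions differ only in the multiplicity of offers.
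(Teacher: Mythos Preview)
Your plan is exactly what the paper does: it offers no detailed proof, stating only that the result ``can be proved by a straightforward structural induction,'' and your outline fills in that induction faithfully. Two small remarks: the appeal to the healthiness conditions in your final paragraph is unnecessary, since the semantic equations are tailored to match the operational rules exactly (no insertion or deletion of $\offer$ actions is ever required in the case analysis); and neither you nor the paper spells out the recursion case $\mu X \spot F(X)$, which is the one place where bare structural induction must be supplemented by the usual approximation/fixed-point argument.
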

%
%
\begin{theorem}
\label{thm:healthy}
For all processes, $\Tracesa[P]$ is a member of the Availability Traces Model
(i.e., it satisfies the conditions of Definition~\ref{def:availability}).
\end{theorem}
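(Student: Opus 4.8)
The quickest route is to note that Theorem~\ref{thm:congruent} identifies $\Tracesa[P]$ with the operational trace set $\set{tr | P \Trans{tr}}$, which Lemma~\ref{lem:opsem-healthy} has already shown to lie in~$\A$; the result is then an immediate corollary. Since one usually also wants the denotational domain to be certified directly from the semantic equations, independently of the operational semantics, I would additionally give a direct proof by structural induction on the syntax of~$P$.

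The base cases $STOP$ and $\Div$ are dispatched by inspecting the explicit set $\set{\trace{}}$, which satisfies the non-emptiness and prefix-closure condition trivially and the three offer-manipulation conditions vacuously. For the inductive step I assume that the semantics of each subterm lies in~$\A$ and check the four conditions for each operator. Non-emptiness and prefix-closure are routine throughout: every semantic equation builds its traces from non-empty, prefix-closed component sets by interleaving, synchronisation, concatenation, hiding or renaming, each of which preserves both properties (the empty trace is always obtainable from the empty traces of the components).

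The substance lies in the offer-manipulation conditions (Conditions~\ref{healthy:offer_duplicate}--\ref{healthy:event_implies_offer}), and the uniform strategy is to trace the relevant occurrence of $\offer a$ (or of~$a$) in the composite trace back to the component trace(s) that contributed it, apply the corresponding induction hypothesis there, and reassemble the modified component traces through the same interleaving or synchronisation. I expect the main obstacle to be Condition~\ref{healthy:offer_duplicate} for synchronising parallel: when $a \in A \inter B$, an $\offer a$ in the composite is contributed \emph{simultaneously} by both $tr_P$ and $tr_Q$, so duplicating or removing it must be done in both components at once. This is exactly what applying the induction hypothesis (Condition~\ref{healthy:offer_duplicate}) to each component delivers, and since offer duplication/removal leaves the visible projection of a trace unchanged, the modified $tr_P$ and $tr_Q$ still synchronise to give the desired composite trace. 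External choice is the other delicate case: here the argument needs a split according to whether the manipulated action lies before or after the first visible event, and, in the latter case, which branch has fired; the offers before the first visible event come from an interleaving of two offer-only component traces and are handled as above. Hiding requires the extra check that the side condition $tr_P \project \offer A = \trace{}$ survives the manipulation, which it does because only offers of non-hidden events are ever touched; prefixing uses the explicit $\set{\offer a}^*$ prefix, and renaming transfers the manipulation pointwise along the relation.

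Finally, for recursion the semantics is the $\subseteq$-least fixed point $\Union_{n \ge 0} F^n(\Tracesa[\Div])$. Each approximant lies in~$\A$ by the inductive step just described (together with the base case for $\Div$), so it remains only to see that membership of~$\A$ is preserved under the $\subseteq$-directed union. This is immediate, since each condition is a closure property of the shape ``presence of certain traces forces presence of certain further traces'': any hypothesis trace in the union already lies in some approximant $F^n(\Tracesa[\Div])$, which is healthy, so the forced traces lie in that approximant and hence in the union; non-emptiness and prefix-closure pass to the union in the same way.
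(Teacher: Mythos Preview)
Your proposal is correct. The paper states this theorem without proof, but its placement immediately after Lemma~\ref{lem:opsem-healthy} and Theorem~\ref{thm:congruent} makes clear that the intended argument is exactly your ``quick route'': $\Tracesa[P] = \set{tr \mid P \Trans{tr}}$ by congruence, and the latter lies in~$\A$ by Lemma~\ref{lem:opsem-healthy}. Your additional structural-induction argument is a welcome independent verification; its treatment of synchronised offers, external choice, hiding, and the closure of~$\A$ under directed unions is sound, and goes beyond what the paper spells out.
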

%


\paragraph{Full abstraction}
%
We can show that this model is fully abstract with respect to a form of
testing in the style of~\cite{dNH84}.  We consider tests that may detect the
availability of events.  Following~\cite{gavin:ready}, we write $\Ready a \amp
T$ for a test that tests whether $a$ is available, and if so acts like the
test~$T$.  We also allow a test $SUCCESS$ that represents a successful test,
and a simple form of prefixing.  Formally, tests are defined by the grammar:
\begin{eqnarray*}
T & ::= & SUCCESS | a \then T | \Ready a \amp T .
\end{eqnarray*}
We consider testing systems comprising a test~$T$ and a process~$P$, denoted
$T \parallel P$.  We define the semantics of testing systems by the rules
below; $\omega$ indicates that the test has succeeded, and $\Omega$ represents
a terminated testing system.
\twoinpage{%
\begin{eqnarray*}
\hspace*{-3cm}SUCCESS \parallel P & \trans{\omega} & \Omega
\end{eqnarray*}
}{
\begin{infrule}
P \;\transd{\tau}\; Q
\derive
T \parallel P \;\trans{\tau}\; T \parallel Q
\end{infrule}}
\twoinpage{%
\begin{infrule}
P \;\transd{a}\; Q
\derive
a \then T \parallel P \;\trans{\tau}\; T \parallel Q
\end{infrule}
}{
\begin{infrule}
P \;\transd{\offer a}\; P
\derive
\Ready a \amp T \parallel P \;\trans{\tau}\; T \parallel P
\end{infrule}}
We say that $P$ may pass the test~$T$, denoted $P \may T$, if $T \parallel P$
can perform~$\omega$ (after zero or more $\tau$s). 

We now show that if two processes are denotationally different, we can produce
a test to distinguish them, i.e., such that one process passes the test, and
the other fails it.  Let $tr \in (\Sigma^\dagger)^*$.  We can construct a test
$T_{tr}$ that detects the trace~$tr$.
\begin{eqnarray*}
T_{\trace{}} & = & SUCCESS, \\
T_{\trace{a} \cat tr} & = & a \then T_{tr}, \\
T_{\trace{\offer a} \cat tr} & = & \Ready a \amp T_{tr}.
\end{eqnarray*}
The following lemma can be proved by a straightforward induction on the length
of~$tr$:
\begin{lemma}
For all  processes~$P$,\, $P \may T_{tr}$ if and only if $tr \in \Tracesa[P]$.
\end{lemma}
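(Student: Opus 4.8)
The plan is to prove the purely operational characterisation \( P \may T_{tr} \iff P \Trans{tr} \) by induction on the length of~$tr$, and then to invoke Theorem~\ref{thm:congruent} to rewrite $P \Trans{tr}$ as $tr \in \Tracesa[P]$. Keeping the induction at the operational level is what makes it straightforward: the compositional semantics enters only through this single, final appeal to congruence.

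For the base case $tr = \trace{}$ we have $T_{\trace{}} = SUCCESS$, and the rule $SUCCESS \parallel P \trans{\omega} \Omega$ gives $P \may T_{\trace{}}$ for every~$P$; likewise $P \Trans{\trace{}}$ holds for every~$P$ (via the empty sequence of transitions, taking $Q = P$), so both sides hold unconditionally.

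For the inductive step I would split on the first action of~$tr$. If $tr = \trace{a} \cat tr'$ with $a \in \Sigma$, then $T_{tr} = a \then T_{tr'}$, and inspecting the testing rules shows that the only moves available from $a \then T_{tr'} \parallel P$ are the $\tau$-transitions shadowing $P \transd{\tau} P'$ (which leave the test unchanged) and the $\tau$-transition shadowing $P \transd{a} Q$ (which advances the test to $T_{tr'}$); no $\omega$ is yet possible. Hence $P \may T_{tr}$ holds exactly when, after some such $\tau$-steps, $P$ reaches a state $P'$ with $P' \transd{a} Q$ and $Q \may T_{tr'}$, and applying the induction hypothesis to~$Q$ together with Definition~\ref{def:traces} identifies this with $P \Trans{\trace{a} \cat tr'}$. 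The case $tr = \trace{\offer a} \cat tr'$, where $T_{tr} = \Ready a \amp T_{tr'}$, is symmetric; the one point needing care is that the $\Ready$ rule fires on $P' \transd{\offer a} P'$ and leaves the process state unchanged, so $T_{tr'}$ is subsequently tested against the very same~$P'$---matching the stateless $\offer a$ step in Definition~\ref{def:traces}.

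I expect the main obstacle to be the bookkeeping behind these two ``only way to succeed'' claims: one must argue that $T_{tr} \parallel P$ can do nothing but shadow the trace~$tr$ interleaved with the process's internal $\tau$-actions, so that reaching~$\omega$ is equivalent to $P$ exhibiting the availability trace~$tr$. This is routine given that $T_{tr}$ is deterministic in shape and advances only when $P$ supplies the required $a$ or $\offer a$, together with the statelessness of $\offer$ actions in the $\Ready$ case. Once this operational equivalence is in hand, Theorem~\ref{thm:congruent} completes the proof at once.
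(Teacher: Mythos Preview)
Your proposal is correct and matches the paper's approach: the paper simply states that the lemma ``can be proved by a straightforward induction on the length of~$tr$'' without further detail, and your argument is precisely such an induction. Your choice to factor the proof through the operational characterisation $P \Trans{tr}$ and then invoke Theorem~\ref{thm:congruent} at the end is a clean way to organise it, and is entirely compatible with what the paper sketches.
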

%

\begin{theorem}
$\Tracesa[P] = \Tracesa[Q]$ if and only if $P$ and $Q$ pass the same tests.
\end{theorem}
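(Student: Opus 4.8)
The plan is to prove both directions of the biconditional, using the preceding lemma as the essential bridge between the operational testing relation and the denotational semantics. The lemma states that $P \may T_{tr}$ if and only if $tr \in \Tracesa[P]$, and this is really the whole engine of the proof.

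\medskip\noindent\textbf{The easy direction.} First I would show that if $\Tracesa[P] = \Tracesa[Q]$, then $P$ and $Q$ pass the same tests. The obstacle here is that the lemma only covers the special tests $T_{tr}$ of the form that detect a single trace, whereas an arbitrary test $T$ built from the grammar $SUCCESS \mid a \then T \mid \Ready a \amp T$ need not literally be one of these $T_{tr}$. So I would first observe that, in fact, every test generated by the grammar \emph{is} of the form $T_{tr}$ for a suitable $tr$: a straightforward induction on the structure of $T$ shows that $SUCCESS = T_{\trace{}}$, that $a \then T' = T_{\trace{a} \cat tr}$ whenever $T' = T_{tr}$, and that $\Ready a \amp T' = T_{\trace{\offer a} \cat tr}$ whenever $T' = T_{tr}$. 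Thus the map $tr \mapsto T_{tr}$ is a bijection between availability traces and tests. Given this, if $\Tracesa[P] = \Tracesa[Q]$ then for every test $T = T_{tr}$ we have $P \may T_{tr} \iff tr \in \Tracesa[P] = \Tracesa[Q] \iff Q \may T_{tr}$, so $P$ and $Q$ pass exactly the same tests.

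\medskip\noindent\textbf{The hard direction.} Conversely, suppose $P$ and $Q$ pass the same tests; I would show $\Tracesa[P] = \Tracesa[Q]$ by proving both inclusions, and by symmetry it suffices to prove one. Take any $tr \in \Tracesa[P]$. By the lemma, $P \may T_{tr}$. Since $T_{tr}$ is a test and $P$ and $Q$ pass the same tests, $Q \may T_{tr}$ as well. Applying the lemma again in the other direction gives $tr \in \Tracesa[Q]$. Hence $\Tracesa[P] \subseteq \Tracesa[Q]$, and by the symmetric argument $\Tracesa[Q] \subseteq \Tracesa[P]$, so the two trace sets coincide.

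\medskip\noindent\textbf{Main obstacle.} I expect the only genuinely substantive point to be the observation that the test grammar generates exactly the family $\set{T_{tr}}$, so that the preceding lemma can be applied to an arbitrary test rather than merely to the syntactically distinguished tests $T_{tr}$. Once that correspondence is in hand, both directions follow immediately from the lemma by a two-line argument, with the converse direction being an essentially routine appeal to symmetry. No continuity, healthiness, or congruence machinery is needed here; the whole theorem reduces to the bijection between traces and tests together with the already-established characterisation of testing in terms of trace membership.
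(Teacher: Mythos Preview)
Your proposal is correct and follows essentially the same approach as the paper. The paper dismisses the ``only if'' direction as trivial and proves the ``if'' direction by contrapositive using a distinguishing test $T_{tr}$; you make explicit the observation (left implicit in the paper) that the test grammar is in bijection with availability traces via $tr \mapsto T_{tr}$, which is exactly what justifies the paper's ``trivial''.
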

\begin{proof}
The only if direction is trivial.  If $\Tracesa[P] \ne \Tracesa[Q]$ then
without loss of generality suppose $tr \in \Tracesa[P] - \Tracesa[Q]$; then
${P \may T_{tr}}$ but not $Q \may T_{tr}$.
\end{proof}

We now show that the model contains no junk: each element of the model
corresponds to a process.
\begin{theorem}
\label{thm:no-junk}
Let $T$ be a member of the Availability Traces Model.  Then there is a
process~$P$ such that $\Tracesa[P] = T$.
\end{theorem}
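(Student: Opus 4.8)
The plan is to build, for every trace $u \in T$, a process $P_u$ with $\Tracesa[P_u] = T/u$, where $T/u = \set{v | u \cat v \in T}$ is the continuation of $T$ after $u$; the required process is then $P = P_{\trace{}}$. Each $T/u$ is again a member of the model (the four conditions are preserved under taking continuations), so this is well posed. The recursion proceeds along performed standard events: performing $a$ from $P_u$ leads to $P_{u \cat \trace{a}}$. Since an $\offer a$ action is enabled only in a state that can already perform~$a$, observing an offer may resolve an internal choice and so change the successor reached by a later standard event; consequently the index must be the full trace~$u$, including its $\offer$ actions, and not merely $u \project \Sigma$. As traces may be unbounded, $P_u$ is defined as a least fixed point, which is legitimate by the complete-partial-order structure (Lemma~\ref{lem:cpo}) and the continuity of the operators.

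It remains to realise, within one step, the behaviour of $T/u$ before its first standard event together with the first-event successors. The building block is a menu process $M_X = \extchoice_{a \in X} (a \then P_{u \cat \trace{a}})$ for a set $X \subseteq \Sigma$; operationally $M_X$ is a single stable state that offers exactly $X$ (so any interleaving of offers from $X$ is a self-loop, matching condition~\ref{healthy:offer_duplicate}) and from which every $a \in X$ may be performed (matching conditions~\ref{healthy:offer_implies_event} and~\ref{healthy:event_implies_offer}). To capture the accumulation of offers along $\tau$-paths I chain menus with the timeout operator, $M_{X_0} \timeout M_{X_1} \timeout \cdots \timeout M_{X_m}$, which allows the offers of $X_0$, then (after an internal move) those of $X_1$, and so on, without the possibility of going back; and I resolve the genuine nondeterminism by an internal choice $\Intchoice$ over the admissible chains. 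The admissible sets are read off from $T$: I take $X$ to be a menu after $u$ exactly when $u \cat s \cat \trace{a} \in T$ for every $s \in (\offer X)^*$ and every $a \in X$, and a chain $(X_0, \ldots, X_m)$ to be admissible exactly when all the availability-traces of the corresponding timeout process lie in $T/u$ (the successors being supplied by the recursion).

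With $P_u$ defined as the internal choice of the admissible timeout-chains, I would prove $\Tracesa[P_u] = T/u$ by induction, driving the event dimension by the fixed-point/length induction and handling the offers directly from the compositional equations for $\extchoice$, $\timeout$ and $\Intchoice$. The inclusion $\Tracesa[P_u] \subseteq T/u$ holds by construction, since every admissible chain was selected precisely to keep its traces inside $T/u$, and $\Intchoice$ takes a union. For the reverse inclusion I would show that an arbitrary $tr \in T/u$ is produced by some admissible chain, reading the menus off the successive $\offer$-blocks of $tr$ and using conditions~\ref{healthy:offer_duplicate}--\ref{healthy:event_implies_offer} to certify that each block is a genuine menu and that the transitions between blocks are admissible timeout steps.

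The main obstacle is exactly this matching of the offer structure. The basic model makes co-offerability order-sensitive --- offers accumulate forward along internal moves but cannot be reordered --- and a single simultaneous menu on $\set{a,b}$ (as in $a \then STOP \extchoice b \then STOP$, which has $\trace{\offer a, \offer b, a}$) is genuinely different from a timeout-chain of smaller menus (as in $a \then STOP \timeout b \then STOP$, which does not): the former keeps both events performable throughout, the latter abandons~$a$ once it has moved on. The construction must therefore neither over-generate --- as a single menu $\extchoice_{a \in E(u)}$ offering every performable event $E(u) = \set{a | u \cat \trace{a} \in T}$ would, wrongly making all events mutually co-offerable --- nor under-generate --- as an internal choice of single-event chains would, failing to produce traces such as $\trace{\offer a, \offer b, a}$. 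Pinning down precisely which sets are simultaneous menus and which offers are merely sequentially accumulable, and proving that the chosen family of chains generates $T/u$ on the nose, is where the healthiness conditions must be used with care; the remaining verification against the compositional semantics is routine.
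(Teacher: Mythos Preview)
Your proposal takes a genuinely different route from the paper, and the contrast is instructive. The paper does not try to realise $T/u$ as a single process at all. Instead it builds, for each individual trace $tr \in T$, a straight-line process
\[
P_{\trace{}} = STOP,\quad
P_{\trace{a}\cat tr} = a \then P_{tr},\quad
P_{\trace{\offer a}\cat tr} = a \then \Div \timeout P_{tr},
\]
whose availability traces are exactly the healthiness-closure of $\{tr\}$; then it sets $P = \Intchoice_{tr\in T} P_{tr}$, so that $\Tracesa[P] = \bigcup_{tr\in T}\Tracesa[P_{tr}] = T$ because $T$ is already closed under the healthiness conditions. The whole ``co-offerability'' problem you identify simply evaporates: a trace like $\trace{\offer a,\offer b,a}$ is produced by its own dedicated timeout chain $a\then\Div \timeout (b\then\Div \timeout (a\then STOP))$, and no process in the construction ever needs to offer two events simultaneously.

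Your approach, by contrast, tries to synthesise the full continuation $T/u$ in one go, and this is where the gap lies. You correctly observe that the index $u$ must record offer actions (since $T/(u\cat\trace{\offer a})$ and $T/u$ can differ), but your menu $M_X = \extchoice_{a\in X}(a\then P_{u\cat\trace{a}})$ forgets them: from $M_X$ every self-loop of offers from $X$ leads to the same successor $P_{u\cat\trace{a}}$, so the offer history cannot influence what happens after~$a$, contradicting your own stated requirement. You would need each chain position $M_{X_i}$ to carry its own updated index, but you do not say which offer sequence from $X_{i-1}$ that index should record --- and in general there is no single choice, since different interleavings of offers from the same menu may have different continuations in $T$. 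You acknowledge this obstacle (``pinning down precisely which sets are simultaneous menus\ldots is where the healthiness conditions must be used with care'') but do not resolve it; the paper's one-process-per-trace trick is exactly the missing idea that makes the resolution unnecessary.
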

\begin{proof}
Let $tr$ be a trace.  We can construct a process~$P_{tr}$ as follows:
\begin{eqnarray*}
P_{\trace{}} & = & STOP, \\
P_{\trace{a} \cat tr} & = & a \then P_{tr}, \\
P_{\trace{\offer a} \cat tr} & = & a \then \Div \timeout P_{tr}.
\end{eqnarray*}
Then the traces of $P_{tr}$ are just~$tr$ and those traces implied from~$tr$
by the healthiness conditions of Definition~\ref{def:availability}.  Formally,
we can prove this by induction on~$tr$.  For example:
\begin{itemize}
\item
The traces of $P_{\trace{a} \cat tr}$ are prefixes of traces of the
form\footnote{We write $(\offer a)^k$ to denote a trace containing $k$ copies
  of $\offer a$.}  $(\offer a)^k \cat \trace{a} \cat tr'$, where $k \ge 0$ and
$tr'$ is a trace of $P_{tr}$.  Hence (by the inductive hypothesis) $tr'$ is
implied from~$tr$ by the healthiness conditions.  Thus $\trace{a} \cat tr'$ is
implied from $\trace{a} \cat tr$.  Finally $(\offer a)^k \cat \trace{a} \cat
tr'$ is implied from $\trace{a} \cat tr$ by $k$ applications of healthiness
condition~\ref{healthy:event_implies_offer}.

\item
The traces of $P_{\trace{\offer a} \cat tr}$ are of two forms:
  \begin{itemize}
  \item
    Prefixes of traces of the form $(\offer a)^k \cat \trace{a}$, which is
    implied from $\trace{\offer a}$ by healthiness
    conditions~\ref{healthy:offer_duplicate}
    and~\ref{healthy:offer_implies_event}.

  \item
    Traces of the form $(\offer a)^k \cat tr'$ where $tr'$ is a trace
    of~$P_{tr}$.  Hence (by the inductive hypothesis) $tr'$ is implied
    from~$tr$ by the healthiness conditions.  And so $(\offer a)^k \cat tr'$
    is implied from~$\trace{\offer a} \cat tr$ by healthiness
    condition~\ref{healthy:offer_duplicate}.
  \end{itemize}
\end{itemize}

Then $P = \Intchoice_{tr \in T} P_{tr}$ is such that $\Tracesa[P] = T$.
\end{proof}


\section{Variations}
\label{sec:variations}

In this section we consider variations on the model of the previous section,
extending the models along essentially two different dimensions.  We first
consider models that place a limit on the number of $\offer$ actions between
consecutive standard events.  We then consider models that record the
availability of \emph{sets} of events.  Finally we combine these two
variations, to produce a hierarchy of different models with different
expressive power (illustrated in Figure~\ref{fig:hierarchy}).  For each
variant, we sketch how to adapt the semantic model and full abstraction result
from Section~\ref{sec:offer}.  We concentrate on discussing the relationship
between the different models.

\subsection{Bounded  availability actions}
\label{sec:bounded-number}

Up to now, we have allowed arbitrarily many $\offer$ actions between
consecutive standard events.  It turns out that we can restrict this.  For
example, we could allow at most one $\offer$ action between consecutive
standard events (or before the first event, or after the last event).  This
model is more abstract than the previous; for example, it identifies the
processes
\[
\begin{align}
(a \then STOP \extchoice b \then STOP) 
\intchoice (a \then STOP \extchoice c \then STOP) 
\intchoice (b \then STOP \extchoice c \then STOP)
\end{align}
\]
and 
\[
(a \then STOP \extchoice b \then STOP \extchoice c \then STOP),
\]
whereas the previous model distinguished them by the trace $\trace{\offer
  a,\linebreak[1] \offer b, c}$.

More generally, we define the model that allows at most $n$ $\offer$ actions
between consecutive standard events.  Let $Obs_n$ be the set of availability
traces with this property.  Then the model $\A_n$ is the restriction of~$\A$
to $Obs_n$, i.e., writing $\Tracesan{n}$ for the semantic function for $\A_n$,
we have $\Tracesan{n}[P] = \Tracesa[P] \inter Obs_n$.  In particular, $\A_0$
is equivalent to the standard traces model.

%
%

The following example shows that the models become strictly more refined as
$n$ increases; further, the full Availability Traces Model $\A$ is finer than
each of the approximations~$\A_n$.
\begin{example}
\label{example:models-monotonic-n}
Consider the processes
\begin{eqnarray*}
P_0 & = & STOP \\
P_{n+1} & = &  
  (a \then STOP \intchoice  b \then STOP) \timeout P_n.
\end{eqnarray*}
Suppose $n$ is non-zero and even (the case of odd $n$ is similar).  Processes
$P_n$ and~$P_{n+1}$ can be distinguished in model~$\A_n$ and
model~$\A$, since only $P_{n+1}$ has the trace $\trace{\offer
  a,\linebreak[1] \offer b,\linebreak[1] \offer a,\linebreak[1] \offer
  b,\linebreak[1] \ldots, \offer a,\linebreak[1] \offer b, a}$ with $n$
$\offer$ actions.  However, these processes are equal in
model~$\A_{n-1}$.
\end{example}

Following Roscoe~\cite{awr:revivals}, we write $M \preceq M'$ if model~$M'$ is
finer (i.e.~distinguishes  more processes) than model~$M$, and $\prec$ for the
corresponding strict relation.
The above example shows
\[
\A_0 \equiv \mathcal{T}\prec \A_1  \prec \A_2 \prec \ldots \prec 
  \A.
\]

It is easy to see that these models are all compositional: in all the semantic
equations, the presence of a trace from $Obs_n$ in a composite process is
always implied by the presence of traces from $Obs_n$ in the subcomponents.
It is important, here, that the number of consecutive offers is
downwards-closed: the same result would not hold if we considered a model that
includes \emph{exactly} $n$ offer actions between successive standard events,
for in an interleaving $P \interleave Q$, a sequence of $n$ consecutive offers
may be formed from $k$~offers of~$P$ and $n-k$ offers of~$Q$.

In some cases, the semantic equations have to be adapted slightly to ensure the
traces produced are indeed from $Obs_n$, for example:
\begin{eqns}
\Tracesan{n}[{P \parallel[A][B] Q}] = \\*
\gap
  \begin{align}
  \set{tr | \exists tr_P \in \Tracesan{n}[P] \inter (A^\dagger)^*, 
      tr_Q \in \Tracesan{n}[Q] \inter (B^\dagger)^* \spot 
        tr \in ( tr_P \parallel[(A \inter B)^\dagger] tr_Q ) \inter Obs_n }.
  \end{align}
\end{eqns}

The healthiness conditions need to be adapted slightly to reflect that only
traces from $Obs_n$ are included.  For example,
condition~\ref{healthy:event_implies_offer} becomes
\begin{enumerate}
\item[\ref{healthy:event_implies_offer}$'$.]
\(
\begin{align}
tr \cat \trace{a} \cat tr' \in T 
\land tr \cat \trace{\offer a, a} \cat tr' \in Obs_n \implies 
   tr \cat \trace{\offer a, a} \cat tr' \in T.
\end{align}
\)
\end{enumerate}

Finally, the full abstraction result still holds, but the tests need to be
restricted to include at most $n$ successive $\Ready$ tests.  And the no-junk
result still holds. 


\subsection{Availability sets}
\label{sec:availSets}

The models we have considered so far have considered the availability of a
\emph{single} event at a time.  If we consider the availability of a
\emph{set} of events, can we distinguish more processes?  The answer turns out
to be yes, but only with processes that can either diverge or that exhibit
unbounded nondeterminism (a result which was surprising to me). 

We will consider actions of the form $\offer A$, where $A$ is
a set of events, representing that all the events in~$A$ are simultaneously
available.  We can adapt the derived operational semantics appropriately:
\begin{eqnarray*}
P \transd{\alpha}_\power Q & \;\iff\; & P \trans{\alpha} Q, 
  \gap \mbox{for $\alpha \in \Sigma \union \set{\tau}$}, \\
P \transd{\offer A}_\power P & \;\iff\; & \forall a \in A \spot P \trans{a} \,.
\end{eqnarray*}

For convenience, we define
\begin{eqnarray*}
A^{\power\dagger} & = & A \union \set{\offer B | B \in \power A}.
\end{eqnarray*}
Traces will then be from $(\Sigma^{\power\dagger})^*$.  We can extract traces
of this form from the derived operational semantics as in
Definition~\ref{def:traces} (writing $\TransTau{tr}_\power$ and
$\Trans{tr}_\power$ for the corresponding relations).

We call this model the Availability Sets Traces Model, and will sometimes refer
to the previous model as the Singleton Availability Traces Model, in order to
emphasise the difference.
\begin{definition}
The \emph{Availability Sets Traces Model} $\A^{\power}$ contains
those sets ${T \subseteq (\Sigma^{\power\dagger})^*}$ that satisfy the
following conditions.
\begin{enumerate}
\item
$T$ is non-empty and prefix-closed.

\item
\label{healthy:offer_duplicate-sets}
$\offer$  actions can always be removed from or duplicated
within a trace:
\begin{eqnarray*}
tr \cat \trace{\offer A} \cat tr' \in T & \implies &
  tr \cat \trace{\offer A, \offer A} \cat tr' \in T \land 
  tr \cat tr' \in T.
\end{eqnarray*}

\item
\label{healthy:offer_implies_event-sets}
If a process can offer an event it can perform it:
\(
tr \cat \trace{\offer A} \in T \;\implies \;
  \forall a \in A \spot tr \cat \trace{a} \in T.
\)

\item
\label{healthy:event_implies_offer-sets}
If a process can perform an event it can first offer it:
\begin{eqnarray*}
tr \cat \trace{a} \cat tr' \in T & \implies &
  tr \cat \trace{\offer \set{a}, a} \cat tr' \in T.
\end{eqnarray*}

\item
The offers of a process are subset-closed
\begin{eqnarray*}
tr \cat \trace{\offer A} \cat tr'  \in T \land B \subseteq A & \implies &
  tr \cat \trace{\offer B} \cat tr'  \in T.
\end{eqnarray*}

\item
Processes can always offer the empty set
\(
tr \cat  tr'  \in T \; \implies \;
  tr \cat \trace{\offer \set{}} \cat tr'  \in T.
\)
\end{enumerate}
\end{definition}
%
%
\begin{lemma}
\label{lem:opsem-healthy-sets}
For all processes~$P$,\,  $\set{tr | P \Trans{tr}_\power }$ is an
element of the Availability Sets Traces Model.
\end{lemma}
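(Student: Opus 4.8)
The plan is to follow the same pattern as the proof of Lemma~\ref{lem:opsem-healthy}, exploiting two structural facts about the derived transition relation $\transd{}_\power$. First, every offer transition is a self-loop: $P \transd{\offer A}_\power P$ leaves the state unchanged. Second, this transition is enabled at a state~$Q$ exactly when $\forall a \in A \spot Q \trans{a}$, so its availability is monotone in~$A$ and is vacuously guaranteed when $A = \set{}$. Each healthiness condition then reduces to inserting, deleting, or relabelling such self-loops within a witnessing derivation, none of which disturbs the standard ($\Sigma \union \set{\tau}$) transitions.

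In detail, given $tr \in T$ I would fix a witnessing derivation $P = P_0 \transd{\alpha_1}_\power \cdots \transd{\alpha_n}_\power P_n$ with $tr = \trace{\alpha_1, \ldots, \alpha_n} \hide \tau$, and locate the state reached after the relevant prefix. Non-emptiness holds via the empty derivation, and prefix-closure by truncating the derivation after the appropriate visible action. For the duplication/removal condition~\ref{healthy:offer_duplicate-sets}, the $\offer A$ step reaching $tr \cat \trace{\offer A} \cat tr'$ is a self-loop, so it may be repeated to obtain $\trace{\offer A, \offer A}$, or omitted (the next transition starts from the same state) to obtain $tr \cat tr'$. For offer-implies-event~\ref{healthy:offer_implies_event-sets}, the final $\offer A$ step witnesses $\forall a \in A \spot Q \trans{a}$ at the reached state~$Q$, so each $a \in A$ furnishes a transition $Q \transd{a}_\power$ and hence $tr \cat \trace{a} \in T$. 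For event-implies-offer~\ref{healthy:event_implies_offer-sets}, the state~$Q$ immediately before the $a$ step satisfies $Q \trans{a}$, hence $Q \transd{\offer \set{a}}_\power Q$, and this self-loop may be inserted just before the $a$ step.

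The two conditions new to this model are handled by the same technique. For subset-closure, if $B \subseteq A$ and the state~$Q$ before the $\offer A$ step satisfies $\forall a \in A \spot Q \trans{a}$, then a fortiori $\forall a \in B \spot Q \trans{a}$, so $Q \transd{\offer B}_\power Q$; relabelling the self-loop from $\offer A$ to $\offer B$ leaves the rest of the derivation intact. For the empty-offer condition, $Q \transd{\offer \set{}}_\power Q$ holds at every state, since the defining universal quantification over~$\set{}$ is vacuously true, so the self-loop may be inserted at any point. The only real care required is bookkeeping: relating the $\tau$-free trace~$tr$ to the underlying $\tau$-including derivation and identifying the correct intermediate state at each insertion point. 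Once the self-loop and monotone-enabling facts are isolated, every case becomes a one-line manipulation of the derivation.
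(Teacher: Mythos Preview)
The paper does not supply a proof of this lemma (nor of the analogous Lemma~\ref{lem:opsem-healthy} for the singleton model), evidently regarding it as a routine verification. Your argument is correct and is exactly the kind of direct check the omission suggests: each healthiness condition follows immediately from the facts you isolate, namely that every $\offer A$ transition is a self-loop, that its enabling condition $\forall a \in A \spot Q \trans{a}$ is downward-closed in~$A$, and that it is vacuously satisfied when $A = \set{}$.
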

%


\paragraph{Compositional semantics}
We give below semantic equations for the Availability Sets Traces Model.  Most
of the clauses are straightforward adaptations of the corresponding clauses in
the Singleton Availability Traces Model.

For the parallel operators and external choice, we define an operator
$\parallel[X]^{\power}$ such that $tr \parallel[X]^{\power} tr'$ gives all
traces resulting from traces~$tr$ and~$tr'$, synchronising on events and
offers of events from~$X$.  The definition is omitted due to space
restrictions. 

For relational renaming, we lift the renaming to apply to $\offer$ actions, by
forming the subset-closure of the relational image:
\begin{eqnarray*}
(\offer A)~R~(\offer B) & \iff & 
  \forall b \in B \spot \exists a \in A \spot a~R~b.
\end{eqnarray*}
We again lift it to traces pointwise.  

The semantic clauses are as follows.
\begin{eqns}
\Tracesap[STOP]  = \Tracesap[\Div] = (\offer \set{})^* ,
\nl
\Tracesap[a \then P] = 
  \begin{align}
  Init \union 
   \set{ tr \cat \trace{a} \cat tr' | 
       tr \in Init \land tr' \in \Tracesap[P]}, \\
  \mbox{where } Init = \set{\offer \set{}, \offer \set{a}}^* ,
  \end{align}
\nl
\Tracesap[P \timeout Q] = 
  \begin{align}
  \Tracesap[P] \union\null
  \set{ tr_P \cat tr_Q | 
     tr_P \in \Tracesap[P] \land tr_P \project \Sigma = \trace{}
     \land tr_Q \in \Tracesap[Q]},
  \end{align}
\nl
\Tracesap[P \intchoice Q] = 
  \Tracesap[P] \union \Tracesap[Q], 
\nl
\Tracesap[P \extchoice Q] = \\*
\gap
  \begin{align}
  \set{ tr | 
    \exists tr_P \in \Tracesap[P], tr_Q \in \Tracesap[Q] \spot 
       tr_P \project \Sigma = tr_Q \project \Sigma = \trace{} \land
    tr \in tr_P \parallel[\set{}]^{\power} tr_Q} 
  \union\null \\
  \set{ tr \cat \trace{a} \cat tr_P' | 
    \exists tr_P \cat \trace{a} \cat tr_P' \in \Tracesap[P], 
        tr_Q \in \Tracesap[Q] \spot \\
    \gap tr_P \project \Sigma = tr_Q \project \Sigma = \trace{} \land
    a \in \Sigma \land tr \in tr_P \parallel[\set{}]^{\power} tr_Q} 
  \union\null \\
  \set{ tr \cat \trace{a} \cat tr_Q' | 
    \exists tr_P \in \Tracesap[P], 
        tr_Q \cat \trace{a} \cat tr_Q' \in \Tracesap[Q] \spot \\
    \gap tr_P \project \Sigma = tr_Q \project \Sigma = \trace{} \land
    a \in \Sigma \land tr \in tr_P \parallel[\set{}]^{\power} tr_Q} ,
  \end{align}
\nl
\Tracesap[{P \parallel[A][B] Q}] = 
  \begin{align}
  \set{tr | \exists tr_P \in \Tracesap[P] \inter (A^{\power\dagger})^*, 
      tr_Q \in \Tracesap[Q] \inter (B^{\power\dagger})^* \spot
        tr \in tr_P \parallel[A \inter B]^{\power} tr_Q },
  \end{align}
\nl
\Tracesap[P \interleave Q] = 
  \set{tr | \exists tr_P \in \Tracesap[P],  tr_Q \in \Tracesap[Q] \spot 
     tr \in tr_P \parallel[\set{}]^{\power} tr_Q },
\nl
\Tracesap[P \hide A] = 
  \set{ tr_P \hide A | 
    tr_P \in \Tracesap[P] 
    \land \forall X \spot \offer X \IN tr_P \implies X \inter A = \set{}},
\nl
\Tracesap[P\rrn{R}] =
  \set{tr | \exists tr_P \in \Tracesap[P] \spot tr_P~R~tr },
\nl
\Tracesap[\mu X \spot F(X)] = 
  \mbox{the $\subseteq$-least fixed point of the
  semantic mapping corresponding to~$F$.}
\end{eqns}

\begin{theorem}
The semantics is congruent to the operational semantics:
\(
tr \in \Tracesap[P] \;\mbox{iff}\; P \Trans{tr}_\power.
\)
\end{theorem}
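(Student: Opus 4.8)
The plan is to proceed exactly as for Theorem~\ref{thm:congruent}, by structural induction on the syntax of~$P$, establishing both inclusions simultaneously for each operator: soundness, $tr \in \Tracesap[P] \implies P \Trans{tr}_\power$, and its converse. The observation that keeps the induction manageable is that, by the derived rule $P \transd{\offer A}_\power P \iff \forall a \in A \spot P \trans{a}$, an $\offer A$ action is a self-loop that merely observes which standard events are currently enabled; it never changes the state. Hence an operational witness for $P \Trans{tr}_\power$ is a path through the standard LTS (using $\trans{}$ and $\tau$) together with, at each position carrying an $\offer A$, a check that every $a \in A$ is enabled in the current state. This separates the ``movement'' part of a trace (standard events and $\tau$s, handled exactly as in the ordinary Traces Model) from the ``observation'' part (the offer actions), and lets me reuse the structure of the earlier proof almost verbatim.

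For the base cases, $STOP$ and $\Div$ perform no standard events and reach only states that enable nothing, so the only available offer is $\offer \set{}$ (vacuously), giving operational traces $(\offer \set{})^*$, in agreement with the semantic clauses. For each compositional operator I would first record the relevant decomposition lemmas for $\transd{}_\power$, and hence for $\Trans{tr}_\power$, mirroring the standard results of~\cite[Chapter~7]{awr:csp} but now also covering $\offer$ actions: these relate a transition of a composite term to transitions of its components. For the parallel operators, interleaving and external choice they are read off the (space-omitted) synchronising operator $\parallel[X]^{\power}$; for prefixing, timeout, internal and replicated choice, hiding and renaming they are immediate. Each inductive step is then a routine matching of the semantic clause against the decomposition, using the induction hypotheses on the subcomponents together with, where needed, the healthiness conditions of Lemma~\ref{lem:opsem-healthy-sets} (in particular subset-closure and the always-available empty offer). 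The recursion case is not part of the induction on syntax; as for the Singleton model it follows from the cpo/continuity machinery (the analogues of Lemma~\ref{lem:cpo} and the continuity lemma) together with the standard fact that every finite operational trace of $\mu X \spot F(X)$ already arises from a finite unfolding of~$F$.

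The main obstacle is the treatment of $\offer A$ actions for non-singleton~$A$ in external choice and in the parallel operators, where offers combine in a way that is genuinely more subtle than simple interleaving. Operationally, $P \extchoice Q$ offers~$A$ iff every $a \in A$ is offered by~$P$ or by~$Q$, so the correct behaviour of $\parallel[\set{}]^{\power}$ on offer-only prefixes is to split each $\offer A$ as $A = A_P \union A_Q$ with $A_P$ offered by~$P$ and $A_Q$ offered by~$Q$; that is, offers are \emph{unioned} even though the synchronisation set is empty, in contrast to the interleaving of ordinary events. For $P \parallel[A][B] Q$ the split must additionally respect the alphabets: an event of $A \inter B$ is offered only if \emph{both} components offer it, an event of $A - B$ only if~$P$ does, and so on, while an event outside $A \union B$ can never be offered. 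The crux of the proof is therefore to verify that the omitted operator $\parallel[X]^{\power}$ realises exactly this union-with-synchronisation behaviour on offer-sets, and hence that, modulo subset-closure and the empty-offer condition, the semantic clauses faithfully reproduce the operational rule $P \transd{\offer A}_\power P \iff \forall a \in A \spot P \trans{a}$ for each composite form.
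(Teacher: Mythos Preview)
Your proposal is correct and follows the natural route: structural induction on~$P$, mirroring Theorem~\ref{thm:congruent}. The paper in fact omits the proof entirely, so there is nothing to compare against beyond the implicit claim that the same method works; your sketch supplies exactly that, and your identification of the one genuinely new ingredient---that $\parallel[X]^{\power}$ must \emph{union} offer-sets (splitting each $\offer A$ across the components subject to the synchronisation discipline on $X$) rather than merely interleave them---is spot on and is precisely the point at which a careless adaptation of the singleton proof would go wrong.
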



\paragraph{Full abstraction}
In order to prove a full abstraction result, we extend our class of tests to
include a test of the form $\Ready A \amp P$, which tests whether all the
events in~$A$ are available, and if so acts like the test~$T$.  Formally, this
test is captured by the following rule.
\begin{infrule}
P \transd{\offer A} P
\derive
\Ready A \amp T \parallel P \trans{\tau} T \parallel P
\end{infrule}
Given $tr \in (\Sigma^{\power\dagger})^*$, we can construct a test $T_{tr}$
that detects the trace~$tr$ as follows.
\begin{eqnarray*}
T_{\trace{}} & = & SUCCESS \\
T_{\trace{a} \cat tr} & = & a \then T_{tr} \\
T_{\trace{\offer A} \cat tr} & = & \Ready A \amp T_{tr}
\end{eqnarray*}
The full abstraction proof then proceeds precisely as in
Section~\ref{sec:offer}.

We can prove a no-junk result as in Section~\ref{sec:offer}.  Given
trace~$tr$, we can  construct a process~$P_{tr}$ as follows:
\begin{eqnarray*}
P_{\trace{}} & = & STOP, \\
P_{\trace{a} \cat tr} & = & a \then P_{tr}, \\
P_{\trace{\offer A} \cat tr} & = & (?a:A \then \Div) \timeout P_{tr}.
\end{eqnarray*}
Then the traces of $P_{tr}$ are just~$tr$ and those traces implied from~$tr$
by the healthiness conditions.  Again, given an element $T$ from the
Availability Sets Traces Model, we can define $P = \Intchoice_{tr \in T}
P_{tr}$; then $\Tracesap[P] = T$.


\paragraph{Distinguishing power}
%
We now consider the extent to which the Availability Sets Model can
distinguish processes that the Singleton Availability Model can't. 
\begin{example}
\label{example:sets_vs_singles}
The Availability Sets Traces Model distinguishes the processes
\begin{eqnarray*}
P & = & a \then STOP \extchoice b \then STOP, \\
Q & = & (a \then STOP \intchoice b \then STOP) \timeout Q, 
\end{eqnarray*}
since just $P$ has the trace $\trace{ \offer \set{a,b} }$.  However, these are
equivalent in the Singleton Availability Traces Model; in particular, both can
perform 
arbitrary sequences of $\offer a$ and $\offer b$ actions initially.
\end{example}
The process~$Q$ above can diverge (i.e., perform an infinite number of
internal~$\tau$ events corresponding to
timeouts).  We can obtain a similar effect without divergence, but using
unbounded nondeterminism.  
\begin{example}
\label{example:sets_vs_singles_2}
Consider 
\begin{eqnarray*}
Q_0 & = & STOP, \\
Q_{n+1} & = & (a \then STOP \intchoice b \then STOP) \timeout Q_n, \\
Q' & = & \Intchoice_{n \in \nat} Q_n.
\end{eqnarray*}
Then $P$ (from the previous example) and $Q'$  are distinguished in the
Availability Sets Traces Model but not the Singleton Availability Traces
Model.
\end{example}

For finitely nondeterministic, non-divergent processes, it is enough to
consider the availability of only \emph{finite} sets, since such a process can
offer an infinite set~$A$ iff and only if it can offer all its finite subsets.
However, for infinitely nondeterministic processes, one can make more
distinctions by considering infinite sets.
\begin{example}
\label{example:finite_vs_infinite}
Let $A$ be an infinite set of events.  Consider the processes
\[
?a : A \then STOP
\gap\mbox{and}\gap
\Intchoice_{b \in A} ?a: A - \set{b} \then STOP
\]
Then these have the same finite availability sets, but just the former has
all of~$A$ available.
\end{example}

\begin{prop}
\label{prop:singleton_vs_sets}
If $P$ and~$Q$ are non-divergent, finitely nondeterministic processes, that
are equivalent in the Singleton Availability Model, then they are equivalent
in the Availability Sets Model.
\end{prop}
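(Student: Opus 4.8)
The plan is to show that, for non-divergent, finitely nondeterministic processes, the Availability Sets semantics is a \emph{function} of the Singleton Availability semantics; the proposition then follows at once, since $\Tracesa[P]=\Tracesa[Q]$ forces $\Tracesap[P]=\Tracesap[Q]$. The device that recovers set-offers from singleton information is a \emph{pinning} translation. For a trace $u\in(\Sigma^{\power\dagger})^*$ all of whose offer-sets are finite, and for $n\ge 0$, let $\mathrm{pin}_n(u)$ be the singleton trace obtained by leaving standard events unchanged and replacing each item $\offer\set{a_1,\ldots,a_k}$ by the block $(\offer a_1,\ldots,\offer a_k)^n$ of singleton offers repeated $n$ times. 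I would prove that, for non-divergent finitely nondeterministic~$R$,
\[
u\in\Tracesap[R]\quad\iff\quad \forall n\ge 0 \spot \mathrm{pin}_n(u)\in\Tracesa[R].
\]
The forward direction is easy and holds for every process: if $u\in\Tracesap[R]$ then at each $\offer A$ the witnessing run sits in a state offering all of~$A$, and since $\offer$-actions leave the state unchanged one may oscillate through $a_1,\ldots,a_k$ there arbitrarily often. The content is the reverse direction.

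For the reverse direction I would first record two consequences of the hypotheses. As $R$ is finitely nondeterministic its operational semantics is finitely branching, and as it is non-divergent it has no infinite $\tau$-path; so by K\"onig's Lemma the set $\set{R' | R\Trans{w}_\power R'}$ of states reachable after any standard-event history~$w$ is \emph{finite}, and (again by non-divergence, a $\tau$-cycle yielding divergence) the $\tau$-relation on these states is \emph{acyclic}. Now suppose $\mathrm{pin}_n(u)\in\Tracesa[R]$ for some $n$ larger than the number of states reachable after every prefix of~$u$, and fix a witnessing singleton run. Consider one block $(\offer a_1,\ldots,\offer a_k)^n$, realised while the process sits at position~$w$: the run traverses a $\tau$-path through the finitely many, acyclically ordered states reachable after~$w$. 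If \emph{no} state on this path offered all of $A=\set{a_1,\ldots,a_k}$, then each of the $n$ cycles $\offer a_1,\ldots,\offer a_k$ would have to contain a state-change strictly inside it, since a single state cannot emit a whole cycle; these lie in disjoint intervals, giving at least $n$ distinct $\tau$-transitions, whereas an acyclic path through fewer than $n$ states admits fewer than $n$ of them --- a contradiction. Hence some state $s_j$ on the path offers all of~$A_j$.

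Having located such a state $s_j$ in each block, I would assemble a derivation of~$u$ directly. The $\offer$-actions of the witnessing run are state-preserving, so discarding them leaves a run of~$R$ built from $\tau$- and standard-transitions that passes through $s_1,\ldots,s_r$ in order and performs exactly the standard events of~$u$; inserting a single state-preserving $\offer A_j$ at each~$s_j$ (legitimate since $A_j$ is offered there) then witnesses $u\in\Tracesap[R]$, completing the biconditional. This reverse direction, and in particular the counting argument of the previous paragraph, is the main obstacle: it is precisely where both hypotheses are indispensable, and it is what fails for the divergent~$Q$ of Example~\ref{example:sets_vs_singles} and the infinitely nondeterministic~$Q'$ of Example~\ref{example:sets_vs_singles_2}, where unbounded oscillation between $\offer a$ and $\offer b$ is realisable although no single state offers $\set{a,b}$.

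Finally I would lift the finiteness restriction on offer-sets. For an infinite offer-set~$A$ occurring in~$u$, the clause $\offer A$ demands a single reachable state offering all of~$A$ (and supporting the remainder of~$u$); since only finitely many states are reachable at that point, a pigeonhole argument shows such a state exists iff, for every finite $F\subseteq A$, some such state offers~$F$ --- that is, iff every finite-offer-set approximation of~$u$ lies in $\Tracesap[R]$. Together with the subset-closure healthiness condition this reduces membership of arbitrary traces to membership of finite-offer-set traces, which the biconditional above has already shown to be determined by $\Tracesa[R]$. Thus $\Tracesap[R]$ is a function of $\Tracesa[R]$ for every non-divergent, finitely nondeterministic~$R$, and applying this to $P$ and $Q$ with $\Tracesa[P]=\Tracesa[Q]$ gives $\Tracesap[P]=\Tracesap[Q]$, as required.
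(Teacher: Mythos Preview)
Your proof is correct and rests on the same pigeonhole idea as the paper's: replace an offer-set by a sufficiently long repetition of its singleton offers, then use the bound on consecutive $\tau$-steps (available because a non-divergent, finitely nondeterministic process has only finitely many reachable states and an acyclic $\tau$-relation at each point) to force some single state to offer the whole set. The paper argues by contradiction on one distinguishing action $\offer A$, taking $k$ to be the $\tau$-bound after~$tr$ and exhibiting the singleton trace $tr\cat\trace{\offer a_1,\ldots,\offer a_n}^{k+1}\cat tr'$; you instead prove the stronger statement that $\Tracesap[R]$ is determined by $\Tracesa[R]$ via the $\mathrm{pin}_n$ biconditional. Your packaging handles traces with several offer-set actions uniformly (a point the paper glosses over, since its $tr$ and~$tr'$ may themselves contain offer-set actions) and treats infinite offer-sets by an explicit compactness step rather than the paper's one-line reduction to finite~$A$; but the essential mechanism is the same.
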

%
%
\begin{proof}
Suppose, for a contradiction, that $P$ and $Q$ are non-divergent and finitely
deterministic, are equivalent in the Singleton Availability Model, but are
distinguished in the Availability Set Model.  Then, without loss of
generality, there are traces~$tr$ and $tr'$, and set of events~$A$ such that
$tr \cat \trace{ \offer A} \cat tr'$ is a trace of~$P$ but not of~$Q$.  By the
discussion in the previous paragraph, we may assume, without loss of
generality, that $A$ is finite, say $A = \set{a_1, \ldots, a_n}$.  Since $Q$
is non-divergent and finitely-nondeterministic, there is some bound, $k$~say,
on the number of consecutive $\tau$ events that it can perform after~$tr$.
Since $P$ can offer all of~$A$ after $tr$, it can also offer any individual
events from~$A$, sequentially, in an arbitrary order.  In particular, it has
the singleton availability trace
\[
tr \cat \trace{ \offer a_1, \ldots, \offer a_n }^{k+1} \cat tr'.
\]
Since $P$ and $Q$ are, by assumption, equivalent in the Singleton Availability
model, $Q$ also has this trace.  $Q$~must perform at most $k$ $\tau$ events
within the sub-trace $\trace{ \offer a_1, \ldots, \offer a_n }^{k+1}$.  This
tells us that there is a sub-trace within that, of length~$n$, containing no
$\tau$~events. Within this sub-trace there are no state changes (i.e., there
are only self-loops corresponding to the $\offer$ actions), and so all the
$a_i$ are offered in the same state.  Hence $tr \cat \trace{
  \offer A } \cat tr'$ is an availability set trace of~$Q$, giving a
contradiction.
\end{proof}


\paragraph{Bounded sets}
%
We can consider some variants on the Availability Sets Traces Model.

First, let us consider the model $\A^k$ that places a limit of size~$k$ upon
availability sets.  It is reasonable straightforward to produce 
compositional semantics for such models, and to adapt the full abstraction and
no-junk results.   It is perhaps surprising that such a semantics is
compositional, since a similar result does not hold for stable
failures~\cite{BL04} (although it is conjectured in~\cite{awr:revivals} that
this does hold for acceptances).

Clearly, $\A^1 \equiv \A$, and
$\A^0 \equiv \mathcal{T}$ (the standard traces model).
Examples \ref{example:sets_vs_singles} and \ref{example:sets_vs_singles_2}
show that $\A^2$ is finer that $\A^1$.
%
%
We can generalise those examples to show that each model $\A^k$ is
finer than $\A^{k-1}$. 
\begin{example}
\label{example:monotonic-k}
Let $A_k$ be a set of size~$k$.  Consider
\begin{eqnarray*}
P_k & = & ?a:A_k \then STOP , \\
Q_k & = & \Intchoice_{b \in A_k} ( ?a:A_k-\set{b} \then STOP ) \timeout Q_k.
\end{eqnarray*}
Then $P_k$ and~$Q_k$ are distinguished in $\A^k$ since only $P_k$ has
the trace $\trace{ \offer A_k }$.  However they are equivalent in
$\A^{k-1}$: in particular, both can initially perform any trace of
offers of size~$k-1$. 
\end{example}

The limit of the models~$\A^k$ considers arbitrary \emph{finite}
availability sets; we term this $\A^{\bbold{F}}$.  The model
$\A^{\bbold{F}}$ distinguishes the processes $P_k$ and $Q_k$ from
Example~\ref{example:monotonic-k}, for all~$k$, so is finer than each of the
models with bounded availability sets.  As shown by
Example~\ref{example:finite_vs_infinite}, $\A^{\bbold{F}}$ is coarser
than $\A^{\power}$.

In fact, for an arbitrary infinite cardinal $\kappa$, we can consider the
model~$\A^\kappa$ that places a limit of size~$\kappa$ upon availability sets.
Example~\ref{example:finite_vs_infinite} showed that considering finite
availability sets distinguishes fewer processes than allowing infinite
availability sets, i.e.~$\A^{\bbold{F}} \prec \A^\kappa$.  The following
example shows that the models become finer as $\kappa$ increases.
\begin{example}
\label{example:monotonic-infinite-kappa}
Pick an infinite cardinal $\kappa$, and pick alphabet $\Sigma$ such that
$card(\Sigma) \ge \kappa$.  Then the processes
\begin{eqnarray*}
P_\kappa & = & \Intchoice_{A \subseteq \Sigma,\, card(A) = \kappa} ?a : A \then STOP, \\
Q_\kappa & = & \Intchoice_{A \subseteq \Sigma,\, card(A) < \kappa} ?a : A \then STOP
\end{eqnarray*}
are distinguished by the model $\A^\kappa$, since only $P_\kappa$ can offer
sets of size~$\kappa$.  However, for $\lambda < \kappa$, they are not
distinguished by the model $\A^{\lambda}$; for example, if $P_\kappa$ has the
trace $\trace{ \offer A_1, \ldots, \offer A_n }$ in~$\A^{\lambda}$, then
$card(A_i) \le \lambda < \kappa$, for each~$i$; but also $A = \Union_{i = 1}^n
A_i$ has $card(A) \le \lambda < \kappa$, so $Q_\kappa$ can perform this trace
by picking~$A$ in the nondeterministic choice.
\end{example}
(In fact, this example shows that these models ---like the cardinals--- form a
proper class, rather than a set!)  In most applications, the alphabet~$\Sigma$
is countable; these models then coincide for processes with such an alphabet.
The model $\A^{\power}$ distinguishes the processes $P_\kappa$ and
$Q_\kappa$ from Example~\ref{example:monotonic-infinite-kappa}, for
all~$\kappa$, so is finer than each of the models $\A^\kappa$.

Summarising:
\[
\A^0 \equiv \mathcal{T} \prec \A^1 \equiv \A \prec 
  \A^2 \prec \ldots \prec \A^{\bbold{F}} \prec 
  \A^{\aleph_0} \prec \A^{\aleph_1} \prec \ldots \prec
  \A^{\power} .
\]

\subsection{Combining the variations}

We can combine the ideas from Sections~\ref{sec:bounded-number}
and~\ref{sec:availSets} to produce a family of models $\A_\n^\k$, where:
\begin{itemize}
\item 
$\k$ is either a natural number~$k$ or infinite cardinal~$\kappa$, indicating
  an upper bound on the size of availability sets, or the symbol $\bbold{F}$
  indicating arbitrary finite availability sets are allowed, or the symbol
  $\bbold{P}$ indicating arbitrary availability sets are allowed;

\item
$\n$ is either a natural number~$n$, indicating an upper bound on the number
  of availability sets between successive standard events, or the symbol
  $\bbold{F}$ indicating any finite number is allowed.
\end{itemize}

\begin{figure}[htbp]
\begin{center}
\ 
\xymatrix @R=8.5mm @C=14.85mm @!0{
& & & & & & & & & *[c]{\A^{\power} \equiv \A_\finite^{\power}} \\
\\
& & & & & & & *[c]{\A^{\aleph_2} \equiv \A_\finite^{\aleph_2}} \ar@{.}[uurr] & & 
  \A_3^{\power} \ar@{.}[uu] 
\\
& & & & & & *[c]{\A^{\aleph_1} \equiv \A_\finite^{\aleph_1}} \ar@{-}[ur] & & & 
  \A_2^{\power} \ar@{-}[u] 
\\
& & & & & *[c]{\A^{\aleph_0} \equiv \A_\finite^{\aleph_0}} \ar@{-}[ur] & & 
  \A_3^{\aleph_2} \ar@{.}[uu] \ar@{.}[uurr] & & \A_1^{\power} \ar@{-}[u]
\\
& & & & *[c]{\A^\finite \equiv \A_\finite^\finite} \ar@{-}[ur] & & 
  \A_3^{\aleph_1} \ar@{.}[uu] \ar@{-}[ur] &
  \A_2^{\aleph_2} \ar@{-}[u] \ar@{.}[uurr] 
\\
& & & & & \A_3^{\aleph_0} \ar@{.}[uu] \ar@{-}[ur] & 
  \A_2^{\aleph_1} \ar@{-}[u] \ar@{-}[ur] & \A_1^{\aleph_2} \ar@{-}[u]
  \ar@{.}[uurr]
\\
& & *[c]{\A^3 \equiv \A_\finite^3} \ar@{.}[uurr] & & 
  \A_3^\finite \ar@{.}[uu] \ar@{-}[ur] & 
  \A_2^{\aleph_0} \ar@{-}[u] \ar@{-}[ur] & \A_1^{\aleph_1} \ar@{-}[u]
  \ar@{-}[ur] 
\\
& *[c]{\A^2 \equiv \A_\finite^2} \ar@{-}[ur] & & & 
  \A_2^{\finite} \ar@{-}[u] \ar@{-}[ur] &  \A_1^{\aleph_0} \ar@{-}[u] \ar@{-}[ur] 
\\
{\A \equiv \A^1 \equiv \A_\finite^1\hspace{5mm}} \ar@{-}[ur] & & 
  \A_3^3 \ar@{.}[uu] \ar@{.}[uurr] & & 
  \A_1^{\finite} \ar@{-}[u] \ar@{-}[ur] \\
& \A_3^2 \ar@{.}[uu] \ar@{-}[ur] & \A_2^3 \ar@{-}[u] \ar@{.}[uurr] & 
\\
{\A_3 \equiv \A_3^1} \ar@{.}[uu] \ar@{-}[ur] & \A_2^2 \ar@{-}[u] \ar@{-}[ur] & 
   \A_1^3 \ar@{-}[u] \ar@{.}[uurr] 
\\
\A_2 \equiv \A_2^1 \ar@{-}[u] \ar@{-}[ur] & \A_1^2 \ar@{-}[u] \ar@{-}[ur] 
\\
\A_1 \equiv \A_1^1 \ar@{-}[u] \ar@{-}[ur] 
\\
{\A_0 \equiv \A_\n^0 \equiv \A_0^\k \equiv \mathcal{T}} \ar@{-}[u]
}\ 
\end{center}
\caption{The hierarchy of models\label{fig:hierarchy}}
\end{figure}
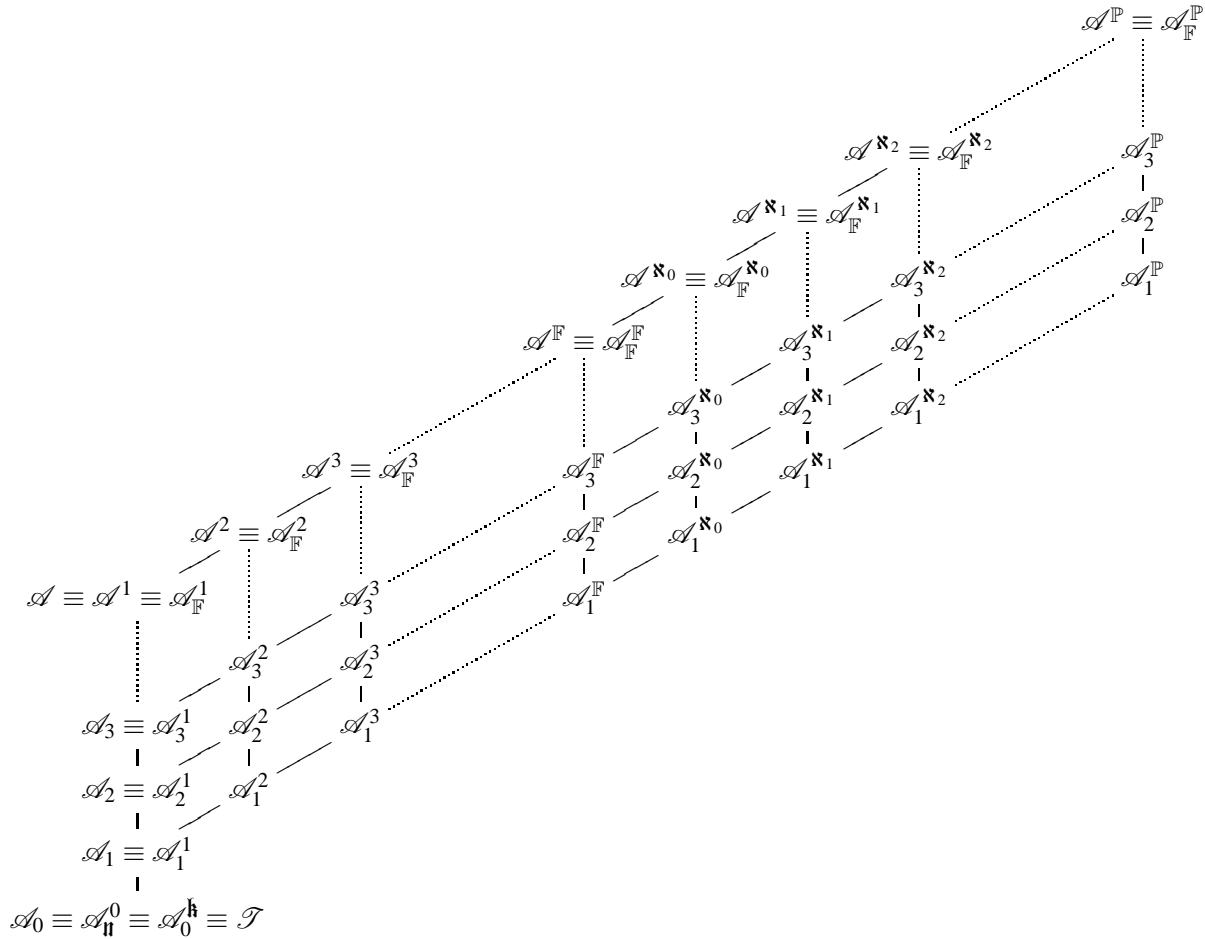

If $\k=0$ or $\n=0$ then $\A_\n^\k$ is just the standard traces model.  Further,
$\A_\n \equiv \A_\n^1$ and $\A^\k \equiv \A_{\bbold{F}}^\k$.

We can show that this family is ordered as the natural extension of the
earlier (one-parameter) families; the relationship between the models is
illustrated in Figure~\ref{fig:hierarchy}.  In particular, these models are
distinct for $\n, \k \ne 0$.  We can re-use several of the earlier examples to
this end.
Example~\ref{example:models-monotonic-n} shows that for each~$\k \ne 0$
\[
\A_0^\k \prec \A_1^\k  \prec \A_2^\k \prec \ldots \prec 
  \A_{\bbold{F}}^\k.
\]
The following example generalises Example~\ref{example:monotonic-k}.
\begin{example}
Let $n$ and $k$ be positive natural numbers.  Let $A$ be a set of size
$n\times k + 1$.    Consider
\begin{eqnarray*}
P & = & ?a:A \then STOP , \\
Q & = & \Intchoice_{B \subset A}  ?a:B \then STOP.
\end{eqnarray*}
Let $A_1, \ldots, A_n, \set{a}$ be a partition of~$A$, where each $A_i$ is of
size~$k$.  Then $\trace{\offer A_1, \ldots, \offer A_n, a}$ is a trace of~$P$
but not of~$Q$, so these processes are distinguished by~$\A_n^k$.
However, the two processes are equivalent in~$\A_n^{k-1}$.  Hence
$\A_n^{k-1} \prec \A_n^k$.  Further,
$\A_n^{\bbold{F}}$ distinguishes these processes, for all
(finite)~$n$ and~$k$, so $\A_n^k \prec \A_n^{\bbold{F}}$.
\end{example}
Example~\ref{example:finite_vs_infinite} shows that
$\A_\n^{\bbold{F}} \prec \A_\n^\kappa$ for each infinite
cardinal~$\kappa$ and for each~$\n$.  Further,
Example~\ref{example:monotonic-infinite-kappa} shows that if $\lambda <
\kappa$ are two infinite cardinals, then $\A_\n^{\lambda} \prec
\A_\n^\kappa \prec \A_\n^{\power}$.




\section{Discussion}
\label{sec:conc}

\paragraph{Simulation and model checking}
The models described in this paper are not supported by the model checker
FDR~\cite{awrFDR, FDRmanual}.  However, it is possible to simulate the
semantics, using a fresh event $\Offer.A$ to simulate the action
$\offer A$.  For example,  $P = a \then STOP \extchoice b \then
STOP$ would be simulated by
\begin{eqnarray*}
P_{sim} & = & a \then STOP_{sim} \extchoice b \then STOP_{sim}
  \extchoice \Offer?A:\power(\set{a,b}) \then P_{sim}, \\
STOP_{sim} & = & \Offer.\set{} \then STOP_{sim} .
\end{eqnarray*}

This simulation process, then, has the same traces as the original
process in the Availability Sets Model, but with each $\offer A$ action
replaced by $\Offer.A$.  The semantics in each of the other models can be
obtained by restricting the size or number of $\Offer$ events. 

In~\cite{awr:expressiveness}, Roscoe shows that any operational semantics that
is CSP-like, in a certain sense, can be simulated using standard CSP
operators.  One can define the operational semantics of the corrent paper in a
way that makes them
CSP-like, in this sense.   Roscoe's simulation is supported by a tool by
Gibson-Robinson~\cite{tom-simulation}, which has been used to automate the
simulation of the Singleton Availability Model and Availability Sets Model.
This opens up the possibility of using FDR to perform analyses in these
models.



\paragraph{Related and further models}
%
%
In~\cite{awr:revivals}, Roscoe investigates the hierarchy of finite linear
observation models of CSP\@.  All of these models record availability or
unavailability of events only in \emph{stable} states (if at all), unlike the
models of this paper.  Example~\ref{example:vs_failures} shows that the
Singleton Availability Model is incomparable with the Stable Failures Model.
In fact, this example shows that all of the models in this paper except the
Traces Model are incomparable with all of the models in Roscoe's hierarchy
except the Traces Model (so including the Ready Trace Model~\cite{OH83} and
the Refusal Testing Model~\cite{abida-thesis}); it is, perhaps, surprising
that the hierarchies are so unrelated.


We believe that we could easily adapt our models to extend any of the finite
linear observation models from~\cite{awr:revivals}, so as to obtain a
hierarchy similar to that in Figure~\ref{fig:hierarchy}: in effect, the
consideration of availability information is orthogonal to the finite linear
observations hierarchy.
Further, we have not considered divergences within this paper.  We believe
that it would be straightforward to extend this work with divergences, either
building models that are divergence-strict (like the traditional
Failures-Divergences Model~\cite{H85,awr:csp}), or non-divergence-strict (like
the model in~\cite{awr:seeing-beyond-divergence}).

\def\simulates{\stackrel{\subset}{\rightarrow}}

In~\cite{vG01,vG93}, van Glabbeek considers a hierarchy of different semantic
models in the linear time--branching time spectrum.  Several of the models
correspond to standard finite linear observation models, discussed above.  One
other model of interest is simulation.
\begin{definition}\cite{vG01}
A \emph{simulation} is a binary relation $R$ on processes such that for all
events~$a$, if $P~R~Q$ and $P \trans{a} P'$, then for some~$Q'$,\, $Q
\trans{a} Q'$ and $P'~R~Q'$.  
Process~$P$ can be simulated by $Q$, denoted $P \simulates Q$ if there is a
simulation~$R$ with $P~R~Q$.  $P$ and~$Q$ are \emph{similar} if $P \simulates
Q$ and 
$Q \simulates P$. 
\end{definition}
If $P \simulates Q$ and $P \TransTau{tr}_\power$ then one can show that $Q
\TransTau{tr}_\power$, by induction on the length of~$tr$.  Hence if~$P$ and~$Q$
are similar, they are equivalent in the Availability Sets Traces Model, and
hence all our other models.  Simulation is strictly finer than our models,
since it distinguishes $a \then b \then c \then STOP \extchoice a \then b
\then d \then STOP$ and $a \then (b \then c \then STOP \extchoice b \then d
\then STOP)$, for example. 

A further possible class of models that we hope to investigate would record
events that were available as alternatives to the events that were actually
performed, and that were available from the \emph{same state} as the events
that were performed.  For example, such a model would distinguish
\begin{eqnarray*}
P & = & a \then c \then STOP \extchoice b \then STOP \\
Q & = & (a \then STOP \extchoice b \then STOP) \timeout  a \then c \then STOP,
\end{eqnarray*}
since $P$ can perform~$\trace{a,c}$, with $b$ available from the state where
$a$ was performed; but $Q$ does not have such a behaviour.  Note that these
two processes are equivalent in all the other availability models in this
paper.
%

Two further possible directions in which this work could be extended would be
(A)~to record what events are \emph{not} available, or (B)~to record the
\emph{complete} set of events that are available.  We see considerable
difficulties in producing such models.  To see why, consider the process $a
\then P$.  There are two different ways of viewing this process (which amount
to different operational semantics for this process):
\begin{itemize}
\item 
One view is that the event $a$ becomes availability \emph{immediately}.  With
this view: in model~A, one cannot initially observe the unavailability of~$a$;
in model~B, the initial complete availability set is~$\set{a}$.  However,
under this view, the fixed point theory does not work as required, since
$\Div$ is not the bottom element of the subset ordering: in model~A, $\Div$
has $a$ initially unavailable; in model~B, $\Div$'s initial complete
availability set is~$\set{}$; these are both behaviours not exhibited by~$a
\then P$.  Further, under this view, nondeterminism is not idempotent,
since, for example, $a \then P \intchoice a \then P$ has $a$ unavailable
initially; one consequence is that the proof of the no-junk result cannot be
easily adapted to this view.

\item
The other view is that $a \then P$ takes some time to make the event~$a$
available: initially $a$ is unavailable, but an internal state change occurs
to make~$a$ available.  With this view: in model~A, one can initially observe
the unavailability of~$a$; in model~B, the initial complete availability set
is~$\set{}$.  However, under this view, it turns out that the state of $a
\then P$ after the $a$ has become available cannot be expressed in the syntax
of the language; this means that the proof of the no-junk result cannot be
easily adapted to this view.  (Proving a full abstraction result is
straightforward, though.)
\end{itemize}

As noted in the introduction, in~\cite{gavin:ready} we considered models for
an extended version of CSP with a construct ``$\If \Ready a \Then P \Else
Q$''.  This construct tests whether or not its environment offers $a$, so the
model has much in common with model~A above (and was built following the
second view).  As such, it did not have a no-junk result.  Further, it did not
have a full abstraction result, since it distinguished $\If \Ready a \Then P
\Else P$ and~$P$, but no reasonable test would distinguish these processes.


\paragraph{Acknowledgements}
I would like to thank Bill Roscoe, Tom Gibson-Robinson and the anonymous
referees for useful comments on this work.

\bibliographystyle{eptcs}
\bibliography{avail}


\end{document}